\tikzstyle{node}=[minimum size=0.3cm]
\tikzstyle{Z}=[fill={rgb,255:red,230; green,254; blue,230}, draw={rgb,255: red,61; green,77; blue,61}, shape=circle]
\tikzstyle{X}=[fill={rgb,255:red,255; green,135; blue,136}, draw={rgb,255: red,102; green,54; blue,54}, shape=circle]
\tikzstyle{Y}=[fill=zxblue, draw=zxdblue, shape=circle]
\tikzstyle{Z_big}=[fill={rgb,255:red,230; green,254; blue,230}, draw={rgb,255: red,61; green,77; blue,61}, shape=circle, minimum width=1.6em, font={\small}]
\tikzstyle{X_big}=[fill={rgb,255:red,255; green,135; blue,136}, draw={rgb,255: red,102; green,54; blue,54}, shape=circle, minimum width=1.6em, font={\small}]
\tikzstyle{Y_big}=[fill=zxblue, draw=zxdblue, shape=circle, minimum width=1.6em, font={\small}]
\tikzstyle{Z_tri}=[fill={rgb,255:red,230; green,254; blue,230}, draw={rgb,255: red,61; green,77; blue,61}, regular polygon, regular polygon sides=3, draw, shape border rotate=0, inner sep=0pt, minimum width=15pt, line width=0.75]
\tikzstyle{X_tri}=[fill={rgb,255:red,255; green,135; blue,136}, draw={rgb,255: red,102; green,54; blue,54}, regular polygon, regular polygon sides=3, draw, shape border rotate=0, inner sep=0pt, minimum width=15pt, line width=0.75]
\tikzstyle{Z_tri_inv}=[fill={rgb,255:red,230; green,254; blue,230}, draw={rgb,255: red,61; green,77; blue,61}, regular polygon, regular polygon sides=3, draw, shape border rotate=180, inner sep=0pt, minimum width=15pt, line width=0.75, font={\footnotesize}]
\tikzstyle{X_tri_inv}=[fill={rgb,255:red,255; green,135; blue,136}, draw={rgb,255: red,102; green,54; blue,54}, regular polygon, regular polygon sides=3, draw, shape border rotate=180, inner sep=0pt, minimum width=15pt, line width=0.75]
\tikzstyle{Z_tri_l}=[fill={rgb,255:red,230; green,254; blue,230}, draw={rgb,255: red,61; green,77; blue,61}, regular polygon, regular polygon sides=3, draw, shape border rotate=90, inner sep=0pt, minimum width=15pt, line width=0.75]
\tikzstyle{X_tri_l}=[fill={rgb,255:red,255; green,135; blue,136}, draw={rgb,255: red,102; green,54; blue,54}, regular polygon, regular polygon sides=3, draw, shape border rotate=90, inner sep=0pt, minimum width=15pt, line width=0.75]
\tikzstyle{H}=[fill=yellow, draw=black, shape=rectangle, minimum width=2mm, minimum height=2mm]
\tikzstyle{Y_box}=[draw=black, shape=rectangle, minimum width=2mm, minimum height=2mm, tikzit fill={rgb,255: red,255; green,128; blue,0}]
\tikzstyle{Z_med}=[fill={rgb,255:red,230; green,254; blue,230}, draw={rgb,255: red,61; green,77; blue,61}, shape=circle, minimum width=1.1em, font={\footnotesize}]
\tikzstyle{X_med}=[fill={rgb,255:red,255; green,135; blue,136}, draw={rgb,255: red,102; green,54; blue,54}, shape=circle, minimum width=1.1em, font={\footnotesize}]
\tikzstyle{Y_med}=[fill=zxblue, draw=zxdblue, font={\footnotesize}, minimum width=1.1em, font={\footnotesize}, shape=circle]
\tikzstyle{ZP}=[fill={rgb,255:red,230; green,254; blue,230}, draw={rgb,255: red,61; green,77; blue,61}, regular polygon, regular polygon sides=3, shape border rotate=30]
\tikzstyle{ZM}=[fill={rgb,255:red,230; green,254; blue,230}, draw={rgb,255: red,61; green,77; blue,61}, regular polygon, regular polygon sides=3, shape border rotate=-30]
\tikzstyle{XP}=[fill={rgb,255:red,255; green,135; blue,136}, draw={rgb,255: red,102; green,54; blue,54}, regular polygon, regular polygon sides=3, shape border rotate=30]
\tikzstyle{XM}=[fill={rgb,255:red,255; green,135; blue,136}, draw={rgb,255: red,102; green,54; blue,54}, regular polygon, regular polygon sides=3, shape border rotate=-30]
\tikzstyle{small_box}=[fill=white, draw=black, shape=rectangle, minimum width=1.5cm, minimum height=1.5cm, font={\footnotesize}]
\tikzstyle{med_rectangle}=[fill=white, draw=black, shape=rectangle, minimum width=2.8cm, minimum height=1.5cm, font={\footnotesize}]
\tikzstyle{big_rectangle}=[fill=white, draw=black, shape=rectangle, minimum width=4.5cm, minimum height=1.5cm, font={\footnotesize}]
\tikzstyle{smol_box}=[fill=white, draw=black, shape=rectangle, minimum width=1cm, minimum height=1cm]
\tikzstyle{poo}=[minimum height=0.7cm, minimum width=0.7cm, path picture={\node at (path picture bounding box.center) {\includegraphics[width=0.7cm] {figures/poo}};}]
\tikzstyle{Z_long}=[fill={rgb,255:red,230; green,254; blue,230}, draw={rgb,255: red,61; green,77; blue,61}, shape=rectangle, rounded corners=0.25cm, minimum height=0.5cm, inner sep=0.25em, font={\scriptsize}]
\tikzstyle{X_long}=[fill={rgb,255:red,255; green,135; blue,136}, draw={rgb,255: red,102; green,54; blue,54}, shape=rectangle, rounded corners=0.25cm, minimum height=0.5cm, inner sep=0.25em, font={\scriptsize}]
\tikzstyle{med_rectv}=[fill=white, draw=black, shape=rectangle, minimum width=1.1cm, minimum height=2.4cm, font={\scriptsize}, inner sep=0.2cm]
\tikzstyle{Z dot}=[fill={rgb,255: red,0; green,127; blue,0}, draw=black, shape=circle, minimum width=1.5mm]
\tikzstyle{X dot}=[fill={rgb,255:red,255; green,21; blue,0}, draw=black, shape=circle, minimum width=1.5mm]
\tikzstyle{Z phase dot}=[fill={rgb,255: red,0; green,127; blue,0}, draw=black, shape=circle, font={\footnotesize}]
\tikzstyle{X phase dot}=[fill={rgb,255:red,255; green,21; blue,0}, draw=black, shape=circle, font={\footnotesize}]
\tikzstyle{ZYa}=[draw=black, shape=rectangle, rectangle split, rectangle split parts=2, rectangle split horizontal, rectangle split part fill={zxgreen, zxblue}, rectangle split draw splits=false, minimum height=2mm, font={\tiny}]
\tikzstyle{YZ}=[draw=black, shape=rectangle, rectangle split, rectangle split parts=2, rectangle split horizontal, rectangle split part fill={zxblue, zxgreen}, rectangle split draw splits=false, minimum height=2mm, font={\tiny}]
\tikzstyle{XYa}=[draw=black, shape=rectangle, rectangle split, rectangle split parts=2, rectangle split horizontal, rectangle split part fill={zxred, zxblue}, rectangle split draw splits=false, minimum height=2mm, font={\tiny}]
\tikzstyle{YX}=[draw=black, shape=rectangle, rectangle split, rectangle split parts=2, rectangle split horizontal, rectangle split part fill={zxblue, zxred}, rectangle split draw splits=false, minimum height=2mm, font={\tiny}]
\tikzstyle{tiny_box}=[fill=white, draw=black, shape=rectangle, minimum width=1cm, minimum height=1cm, font={\footnotesize}]
\tikzstyle{scalar}=[fill=white, draw=black, shape=diamond, font={\scriptsize}]
\tikzstyle{dashs}=[-, dashed, line width=0.15mm]
\tikzstyle{thick}=[-, line width=0.5mm]
\tikzstyle{arrow}=[->]
\tikzstyle{invisible}=[-, draw=none]
\tikzstyle{functor}=[-, fill={rgb,255: red,240; green,240; blue,240}]
\tikzstyle{boxedge}=[-, fill=white]
\long\def\/*#1*/{}
\pgfplotsset{compat=1.15}
\newcommand{\bra}[1]{\langle#1|}
\newcommand{\ket}[1]{|#1\rangle}
\definecolor{zxgreen}{RGB}{230,254,230}
\definecolor{zxred}{RGB}{255,135,136}
\definecolor{zxblue}{RGB}{116,116,235}
\definecolor{zxdgreen}{RGB}{91,107,91}
\definecolor{zxdred}{RGB}{142,94,94}
\definecolor{zxdblue}{RGB}{61,61,77}
\newcommand{\s}{\enspace}
\newcommand{\bb}[1]{\mathbb{#1}}
\renewcommand{\phi}{\varphi}
\newcommand{\B}{\bb{B}}
\newcommand{\N}{\bb{N}}
\newcommand{\R}{\bb{R}}
\newcommand{\F}{\bb{F}}
\newcommand{\C}{\bb{C}}
\renewcommand{\S}{\bb{S}}
\newcommand{\D}{\bb{D}}
\def\then{\mathbin{\raise 0.6ex\hbox{\oalign{\hfil$\scriptscriptstyle      \mathrm{o}$\hfil\cr\hfil$\scriptscriptstyle\mathrm{9}$\hfil}}}}
\newcommand{\tensor}{\otimes}
\newcommand{\id}{\mathtt{id}}
\newcommand{\dom}{\mathtt{dom}}
\newcommand{\cod}{\mathtt{cod}}
\newcommand{\Mat}{\mathbf{Mat}}
\newtheorem{definition}{Definition}[section]
\newtheorem{remark}[definition]{Remark}
\newtheorem{example}[definition]{Example}
\newtheorem{proposition}[definition]{Proposition}
\newtheorem{theorem}[definition]{Theorem}
\newtheorem{lemma}[definition]{Lemma}
\newtheorem{corollary}[definition]{Corollary}
\title{Diagrammatic Differentiation\\for Quantum Machine Learning}
\author{
Alexis Toumi$^{\star \dagger}$,
Richie Yeung$^\dagger$,
Giovanni de Felice$^{\star \dagger}$
\\
\institute{
$\star$ Department of Computer Science, University of Oxford \hspace{20pt}
$\dagger$ Cambridge Quantum Computing Ltd.}}
\begin{document}
\maketitle


\begin{abstract}
We introduce diagrammatic differentiation for tensor calculus by generalising the
dual number construction from rigs to monoidal categories. Applying this to ZX
diagrams, we show how to calculate diagrammatically the gradient of a linear map
with respect to a phase parameter. For diagrams of parametrised quantum circuits,
we get the well-known parameter-shift rule at the basis of many variational
quantum algorithms. We then extend our method to the automatic
differentation of hybrid classical-quantum circuits, using diagrams with bubbles
to encode arbitrary non-linear operators. Moreover, diagrammatic differentiation
comes with an open-source implementation in DisCoPy, the Python library for
monoidal categories.
Diagrammatic gradients of classical-quantum circuits can then be simplified
using the PyZX library and executed on quantum hardware via the
tket compiler. This opens the door to many practical applications
harnessing both the structure of string diagrams and the computational power of
quantum machine learning.
\end{abstract}

\section*{Introduction}

String diagrams are a graphical language introduced by Penrose \cite{Penrose71}
to manipulate tensor expressions: wires represent vector spaces, nodes represent
multi-linear maps between them. In \cite{PenroseRindler84}, these diagrams are
used to describe the geometry of space-time and an extra piece of notation is
introduced: the covariant derivative is represented as a bubble around the tensor
to be differentiated. Joyal and Street \cite{JoyalStreet88,JoyalStreet91}
characterised string diagrams as the arrows of free monoidal categories, however
their geometry of tensor calculus makes no mention of differential calculus, it
only deals with composition and tensor.

In categorical quantum mechanics \cite{AbramskyCoecke08} string diagrams
are used to axiomatise quantum theory in terms of dagger compact-closed
categories. This culminated in the ZX-calculus \cite{CoeckeDuncan08},
a graphical language that provides a complete set of rules for qubit quantum
computing \cite{JeandelEtAl18a,HadzihasanovicEtAl18}. ZX diagrams
have recently been used for state-of-the-art quantum circuit optimisation
\cite{KissingerVanDeWetering20,DuncanEtAl20,DeBeaudrapEtAl20}, compilation
\cite{CowtanEtAl20,DeGriendDuncan20}, extraction \cite{BackensEtAl20} and error
correction \cite{ChancellorEtAl18,GidneyFowler19}. In recent work, ZX diagrams
have been used to study quantum machine learning \cite{Yeung20,ZhaoGao21} and
its application to quantum natural language processing
\cite{MeichanetzidisEtAl20a,CoeckeEtAl20}.

In this work, we introduce diagrammatic differentiation: a graphical notation
for manipulating tensor derivatives. On the theoretical side, we generalise
the dual number construction (discussed in section~\ref{1-dual-numbers})
from rigs to monoidal categories (section~\ref{2-dual-diagrams}). We then apply
this construction to the category of ZX diagrams (section~\ref{2b-differentiating-zx})
and of quantum circuits (section~\ref{3-dual-circuits}). In section~\ref{4-bubbles}
we give a formal definition of diagrams with bubbles and their gradient with the
chain rule. We use this to differentiate quantum circuits with neural
networks as classical post-processing. The theory comes with an
implementation in DisCoPy \cite{DeFeliceEtAl20}, the Python library for
monoidal categories. The gradients of classical-quantum circuits can then
be simplified using the PyZX library \cite{KissingerVanDeWetering19} and compiled
on quantum hardware via the tket compiler \cite{SivarajahEtAl20}.

\section*{Related work}

The same bubble notation for vector calculus is proposed in \cite{KimEtAl20},
but they have mainly pedagogical motivations and restrict themselves to the case
of three-dimensional Euclidean space. To the best of our knowledge, our
definition is the first formal account of string diagrams with bubbles for
tensor derivatives.

Differential categories \cite{BluteEtAl06} have been introduced to axiomatise
the notion of derivative. More recently reverse derivative categories
\cite{CockettEtAl19} generalised the notion of back-propagation, they have been
proposed as a categorical foundation for gradient-based learning
\cite{CruttwellEtAl21}. These frameworks all define the derivative
of a morphism with respect to its domain. In our setup however, we define
the derivative of parametrised morphism with respect to parameters that are
in some sense external to the category. Investigating the relationship between
these two definitions is left to future work.


\section{Dual numbers}\label{1-dual-numbers}

Dual numbers were first introduced by Clifford in 1873 \cite{Clifford73}.
Given a commutative rig (i.e. a riNg without Negatives) $\S$, the rig of dual
numbers $\D[\S]$ extends $\S$ by adjoining a new element $\epsilon$ such that $\epsilon^2 = 0$.
Concretely, elements of $\D[\S]$ are formal sums $s + s' \epsilon$ where
$s$ and $s'$ are scalars in $\S$.
We write $\pi_0, \pi_1 : \D[\S] \to \S$
for the projection on the real and epsilon component respectively.
Addition and multiplication of dual numbers are given by:
\begin{align} \begin{split}\label{linearity}
(a + a' \ \epsilon ) + (b + b' \ \epsilon)
\quad &= \quad (a + b) \s + \s (a + b') \ \epsilon
\end{split}\\
\begin{split}\label{product-rule}
(a + a' \ \epsilon ) \times (b + b' \ \epsilon)
\quad &= \quad (a \times b) \s + \s (a \times b' \ + \ a' \times b) \ \epsilon
\end{split}
\end{align}

A related notion is that of differential rig: a rig $\S$ equipped with a
derivation, i.e. a map $\partial : \S \to \S$ which preserves sums and satisfies
the Leibniz product rule
$\partial(f \times g) = f \times \partial(g) + \partial(f) \times g$ for all
$f, g \in \S$.
An equivalent condition is that the map $f \mapsto f + (\partial f) \epsilon$
is a homomorphism of rigs $\S \to \D[\S]$. The correspondance also works the
other way around: given a homorphism $\partial : \S \to \D[\S]$ such that
$\pi_0 \circ \partial = \id_\S$, projecting on the epsilon component is a
derivation $\pi_1 \circ \partial : \S \to \S$. The motivating example is the rig
of smooth functions $\S = \R \to \R$, where differentiation is a derivation.
Concretely, we can extend any smooth function $f : \R \to \R$ to a
function $f : \D[\R] \to \D[\R]$ over the dual numbers defined by:
\begin{equation}\label{dual-numbers-eq}
f(a + a' \epsilon) \quad = \quad f(a) \s + \s a' \times (\partial f)(a) \epsilon
\end{equation}

We can use equations~\ref{linearity}, \ref{product-rule} and \ref{dual-numbers-eq}
to derive the usual rules for gradients in terms of dual numbers.
For the identity function we have $\id(a + a' \epsilon) = \id(a) + a' \epsilon$,
i.e. $\partial \id = 1$. For the constant functions we have $c(a + a' \epsilon) =
c(a) + 0 \epsilon$, i.e. $\partial c = 0$.
For addition, multiplication and composition of functions, we can derive the
following \emph{linearity}, \emph{product} and \emph{chain} rules:
\begin{align} \begin{split}
    (f + g)(a + a' \epsilon)
    \s &= \s (f + g)(a) \s + \s a' \times (\partial f + \partial g)(a) \epsilon
\end{split}\\ \begin{split}
    (f \times g)(a + a' \epsilon)
    \s &= \s (f \times g)(a) \s + \s a' \times (f \times \partial g \ + \ \partial f \times g)(a) \epsilon
\end{split}\\ \begin{split}
    (f \circ g)(a + a' \epsilon)
    \s &= \s (f \circ g)(a) \s + \s a' \times (\partial g \ \times \ \partial f \circ g)(a) \epsilon
\end{split} \end{align}

This generalises to smooth functions $\R^n \to \R^m$, where the partial
derivative $\partial_i$ is a derivation for each $i < n$.
The functions $\F_2^n \to \F_2^m$ on the two-element
field $\F_2$ with elementwise XOR as sum and conjunction as product
also forms a differential rig.
The partial derivative is given by $(\partial_i f)(\vec{x}) =
f(\vec{x}_{[x_i \mapsto 0]}) \oplus f(\vec{x}_{[x_i \mapsto 1]})$.
Intuitively, the $\F_2$ gradient $\partial_i f(\vec{x}) \in \F_2^m$ encodes
which coordinates of $f(\vec{x})$ actually depend on the input $x_i$.
An example of differential rig that isn't also a ring is given by the set
$\N[X]$ of polynomials with natural number coefficients, again each
partial derivative is a derivation.

A more exotic example is the rig of Boolean functions with elementwise
disjunction as sum and conjunction as product. Boolean functions $\B^n \to \B^m$
can be represented as tuples of $m$ propositional formulae over $n$ variables.
The partial derivative $\partial_i$ for $i < n$ is defined by induction over
the formulae: for variables we have $\partial_i x_j = \delta_{ij}$,
for constants $\partial_i 0 = \partial_i 1 = 0$ and for negation
$\partial_i \neg \phi = \neg \partial_i \phi$. The derivative of disjunctions
and conjunctions are given by the linerarity and product rules.
Equivalently, the gradient of a propositional formula can be given by
$\partial_i \phi = \neg \phi_{[x_i \mapsto 0]} \land \phi_{[x_i \mapsto 1]}$.
Concretely, a model satisfies $\partial_i \phi$ if and only if it satisfies
$\phi \leftrightarrow x_i$: the derivative is true when the variable and the
formula are positively correlated. Substituting $x_i$ with its negation,
we get that a model satisfies $\partial_i \phi_{[x_i \mapsto \neg x_i]}$
if and only if it satisfies $\phi \leftrightarrow \neg x_i$, i.e. iff variable
and formula are anti-correlated.
Note that although $\B$ and $\F_2$ are isomorphic as sets, they are distinct
rigs. Their derivations are related however by $\partial^{\F_2}_i f \mapsto
\partial^\B_i \phi \lor \partial^\B_i \phi_{[x_i \mapsto \neg x_i]}$
for $\phi : \B^n \to \B$ the formula corresponding to the function
$f : \F_2^n \to \F_2$. That is, a Boolean function depends on an input variable
precisely when either the corresponding formula is positively correlated or
anti-correlated.

Dual numbers are a fundamental tool for \emph{automatic differentiation}
\cite{Hoffmann16}, i.e.
they allow to compute the derivative of a function automatically from its
definition. The key idea is that given a definition of $f : \S^n \to \S^m$ as a
composition of elementary functions, we can compute $(\partial_i f)(a)$ by
evaluating $f(a + \epsilon)$ and projecting on the epsilon component.


\section{Dual diagrams}\label{2-dual-diagrams}

Our main technical contribution is to generalise derivations from rigs to
monoidal categories with sums. Applying this to free monoidal categories,
where the arrows are string diagrams, we say a derivation is diagrammatic
when it commutes with the interpretation of the diagrams. We take two
different flavours of the ZX-calculus as our main examples.

Let $(\mathbf{C}, \otimes, 1)$ be a monoidal category with sums, i.e.
it has commutative monoids on each homset $(+) : \prod_{x,y}
\mathbf{C}(x, y) \times \mathbf{C}(x, y) \to \mathbf{C}(x, y)$
with unit $0 \in \prod_{x,y} \mathbf{C}(x, y)$
such that composition and tensor distribute over the sum.
Note that a one-object monoidal category with sums is simply a rig.
Our motivating example is the category $\mathbf{Mat}_\S$ with natural numbers
as objects and matrices valued in a commutative rig $\S$ as arrows, with matrix
multiplication as composition, Kronecker product as tensor and entrywise sum.
We define the category $\D[\mathbf{C}]$ by adjoining a scalar (i.e. an
endomorphism of the monoidal unit) $\epsilon$ such that $\epsilon \otimes \epsilon = 0$\footnote{
Note that in the case when $\mathbf{C}$ is not symmetric monoidal (or at least braided) the axiom $\epsilon \otimes f = f \otimes \epsilon$ is also needed.
}.
Concretely, the objects of $\D[\mathbf{C}]$ are the same as those of $\mathbf{C}$, the arrows
are given by formal sums $f + f' \epsilon$ of parallel arrows $f, f' \in \mathbf{C}$.
Composition and tensor are both given by the product rule:
\begin{align}
    (f + f' \epsilon) \ \then \ (g + g' \epsilon)
    &\s = \s f \ \then \ g \s + \s (f' \then g \ + \ f \then g') \ \epsilon\\
    (f + f' \epsilon) \tensor (g + g' \epsilon)
    &\s = \s f \tensor g \s + \s (f' \tensor g \ + \ f \tensor g') \ \epsilon
\end{align}

We say that a unary operator on homsets $\partial : \coprod_{x,y}
\mathbf{C}(x, y) \to \mathbf{C}(x, y)$ is a derivation whenever it satisfies the product rules for both composition
$\partial (f \then g) = (\partial f) \then g + f \then (\partial g)$
and tensor
$\partial (f \tensor g) = (\partial f) \tensor g + f \tensor (\partial g)$.
An equivalent condition is that the map $f \mapsto f + (\partial f) \epsilon$
is a sum-preserving monoidal functor $\mathbf{C} \to \D[\mathbf{C}]$.
Again, the correspondance between dual numbers and derivations works the other
way around: given a sum-preserving monoidal functor
$\partial : \mathbf{C} \to \D[\mathbf{C}]$ such that
$\pi_0 \circ \partial = \id_{\mathbf{C}}$, projecting on the epsilon component
gives a derivation $\pi_1 \circ \partial : \coprod_{x,y}
\mathbf{C}(x, y) \to \mathbf{C}(x, y)$. The following propositions characterise
the derivations on the category of matrices valued in a commutative rig $\S$.

\begin{proposition}
Dual matrices are matrices of dual numbers, i.e.
$\D[\mathbf{Mat}_\S] \simeq \mathbf{Mat}_{\D[\S]}$.
\end{proposition}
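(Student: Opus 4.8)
The plan is to construct an equivalence of categories $\D[\Mat_\S] \simeq \Mat_{\D[\S]}$ that is the identity on objects (both categories have the natural numbers as objects) and to check that the hom-set bijections are compatible with composition, tensor and sums. On objects there is nothing to do, so the content is entirely in the description of the arrows. An arrow $m \to n$ in $\D[\Mat_\S]$ is by definition a formal sum $A + A' \epsilon$ of parallel $\S$-valued $n \times m$ matrices; an arrow $m \to n$ in $\Mat_{\D[\S]}$ is a single $n \times m$ matrix $M$ valued in $\D[\S]$, and by definition every entry $M_{ij}$ is a formal sum $a_{ij} + a'_{ij}\epsilon$ with $a_{ij}, a'_{ij} \in \S$. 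So the obvious assignment $A + A'\epsilon \mapsto M$ with $M_{ij} = A_{ij} + A'_{ij}\epsilon$ is a bijection on each hom-set, with inverse $M \mapsto \pi_0(M) + \pi_1(M)\epsilon$ where $\pi_0, \pi_1$ are applied entrywise.

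The next step is to verify functoriality and monoidality. I would check that this bijection sends the $\D[\Mat_\S]$-composition of $A + A'\epsilon$ and $B + B'\epsilon$, which is $(A \then B) + (A' \then B + A \then B')\epsilon$ by the product rule defining $\D[\mathbf{C}]$, to the matrix product over $\D[\S]$ of the corresponding dual-number matrices. Expanding the $(i,j)$ entry of the latter using equations~\ref{linearity} and~\ref{product-rule} for $\D[\S]$ gives $\sum_k (a_{ik} + a'_{ik}\epsilon)(b_{kj} + b'_{kj}\epsilon) = \sum_k a_{ik}b_{kj} + \big(\sum_k (a_{ik}b'_{kj} + a'_{ik}b_{kj})\big)\epsilon$, which is exactly the $(i,j)$ entry of $(A\then B) + (A'\then B + A\then B')\epsilon$; identities obviously match. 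The same elementwise computation with the Kronecker product in place of matrix multiplication handles the tensor, and compatibility with sums is immediate from equation~\ref{linearity} applied entrywise. Finally I would note that $\epsilon \otimes \epsilon = 0$ in $\D[\Mat_\S]$ corresponds to $\epsilon^2 = 0$ in $\D[\S]$, so no extra relations are lost or gained.

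Since the functor is the identity on objects and a bijection on each hom-set, it is automatically an isomorphism of categories, hence a fortiori an equivalence. There is no real obstacle here: the statement is essentially a bookkeeping identity saying that "formal $\epsilon$-sums of matrices" and "matrices of formal $\epsilon$-sums" are the same data, and the only thing to be careful about is that the product rule used to define composition and tensor in $\D[\mathbf{C}]$ matches, entry by entry, the multiplication rule~\ref{product-rule} in the rig $\D[\S]$ together with the distributivity of matrix/Kronecker multiplication over it. The mild subtlety worth spelling out is the role of the relation $\epsilon \otimes f = f \otimes \epsilon$: it is what makes the single scalar $\epsilon$ in $\D[\S]$ well-defined as the image of all the parallel copies of $\epsilon : 1 \to 1$ appearing in the matrix entries, so that the entrywise description is consistent.
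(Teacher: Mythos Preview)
Your proposal is correct and follows exactly the same idea as the paper: the paper's proof is a single line displaying the entrywise bijection $\big(\sum_{ij} f_{ij}\ket{j}\bra{i}\big) + \big(\sum_{ij} f'_{ij}\ket{j}\bra{i}\big)\epsilon \longleftrightarrow \sum_{ij}(f_{ij} + f'_{ij}\epsilon)\ket{j}\bra{i}$, and you spell out the same map together with the routine verifications (composition, tensor, sums) that the paper leaves implicit.
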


\begin{proof}
The isomorphism is given by
$\big( \sum_{ij} f_{ij} \ket{j} \bra{i} \big)
\ + \ \big( f'_{ij} \sum_{ij} \ket{j}  \bra{i} \big) \epsilon
\s \longleftrightarrow \s
\sum_{ij} (f_{ij} + f'_{ij} \epsilon) \ket{j} \bra{i}$.
\end{proof}

\begin{proposition}
Derivations on $\mathbf{Mat}_\S$ are in one-to-one correspondance with
derivations on $\S$.
\end{proposition}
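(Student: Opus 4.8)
The plan is to produce the bijection by hand. In one direction, a derivation $d$ of the rig $\S$ is sent to its \emph{entrywise extension} $\widetilde d$ on $\mathbf{Mat}_\S$, with $(\widetilde d\, f)_{ij} = d(f_{ij})$; in the other, a derivation $\partial$ of $\mathbf{Mat}_\S$ is restricted to the endoscalars $\mathbf{Mat}_\S(1,1) \simeq \S$. One then checks that both assignments are well defined and mutually inverse.

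The well-definedness of the first map is routine: composition (matrix product) and tensor (Kronecker product) express every entry of $f\then g$ and $f\tensor g$ as a finite $\S$-sum of products of entries of $f$ and $g$, so sum-preservation and the Leibniz rule of $d$ propagate entrywise into the two product rules for $\widetilde d$, and $\widetilde d$ obviously preserves sums. For the second map, $\mathbf{Mat}_\S(1,1)$ under composition is exactly the multiplicative monoid of $\S$ with unit $\id_1 = 1$ and entrywise sum, so the composition product rule of $\partial$ restricted there is the Leibniz identity; hence $\partial|_\S$ is a derivation of $\S$. The composite ``extend then restrict'' is immediately the identity, since $\widetilde d$ acts as $d$ on $1\times 1$ matrices.

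The content is the other composite: every derivation $\partial$ of $\mathbf{Mat}_\S$ equals the entrywise extension of $d := \partial|_\S$. I would decompose an arbitrary arrow $f : m\to n$ as $f = \sum_{ij} \ket j \,f_{ij}\, \bra i$, that is $f = \sum_{ij} \bra i \then (f_{ij}\,\id_1) \then \ket j$ with $\bra i : m\to 1$ and $\ket j : 1\to n$, apply the composition product rule twice and sum, obtaining
\[
\partial f \;=\; \textstyle\sum_{ij}\,\bra i \then d(f_{ij}) \then \ket j \;+\; \sum_{ij}\,\partial(\bra i)\then f_{ij}\then\ket j \;+\; \sum_{ij}\,\bra i\then f_{ij}\then\partial(\ket j),
\]
in which the first term is exactly $\widetilde d\, f$. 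It therefore suffices to show $\partial$ annihilates the structural cup and cap arrows: $\partial(\ket j) = 0$ and $\partial(\bra i) = 0$ for all $i,j$. For this I would use $\partial(\id_x) = 0$ — which follows from the equivalent description of a derivation as a sum-preserving monoidal functor $\id + \partial\epsilon$, since functors preserve identities — together with the relations $\ket i \then \bra j = \delta_{ij}\,\id_1$ and $\id_m = \sum_i \bra i \then \ket i$; applying $\partial$ to these produces a system of equations among the entries of $\partial(\ket j)$ and $\partial(\bra i)$.

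The main obstacle is extracting $\partial(\ket j) = \partial(\bra i) = 0$ from that system: the relations only say that certain \emph{sums} of entries vanish, so one needs the commutative rig $\S$ to carry no nontrivial additive cancellation of this kind (e.g. $\S$ zerosumfree, which covers $\N$, $\B$ and the other motivating examples); without such a hypothesis one can build ``twisted'' derivations of the shape $f\mapsto B\then f - f\then B' + \widetilde d\, f$, and the correspondence holds only up to them. A slicker-looking alternative is to combine the previous proposition with a universal property of $\mathbf{Mat}_{(-)}$: a sum-preserving — hence biproduct-preserving — monoidal functor $\mathbf{Mat}_\S \to \D[\mathbf{Mat}_\S] \simeq \mathbf{Mat}_{\D[\S]}$ lying over the identity is determined by its value on the generating object $1$, i.e. by a rig homomorphism $\S\to\D[\S]$ over $\id_\S$, i.e. by a derivation of $\S$; but this route merely relocates the same subtlety into whether chosen biproducts are preserved on the nose.
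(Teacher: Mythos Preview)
Your approach---entrywise extension in one direction, restriction to scalars in the other---is exactly the paper's, only unpacked far more carefully than its two-line sketch. You are also right that the nontrivial step is showing a derivation $\partial$ on $\mathbf{Mat}_\S$ is determined by $\partial|_\S$, that this comes down to $\partial(\ket{j})=\partial(\bra{i})=0$, and that the paper simply asserts it.

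Your proposed obstruction needs a small correction, but your conclusion stands. A twist $f\mapsto B_n f - f B'_m$ must also obey the tensor product rule, and since $mn=nm$ as objects of $\mathbf{Mat}_\S$ this forces both $B_{mn}=B_m\otimes I_n + I_m\otimes B_n$ and $B_{mn}=B_n\otimes I_m + I_n\otimes B_m$; chasing these constraints collapses $B_n$ to $\lambda_n I_n$ with $\lambda_{mn}=\lambda_m+\lambda_n$. So an arbitrary $B$ will not do. But this still leaves room: over any $\S$ with negatives take $\lambda_n=\Omega(n)$ (the number of prime factors with multiplicity) and set $\partial f := (\Omega(n)-\Omega(m))\cdot f$ for $f:m\to n$. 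This is additive, satisfies both product rules, annihilates every identity and every scalar, yet $\partial(\ket{0}:1\to 2)=\ket{0}\neq 0$. Hence the proposition as stated fails for $\S=\Z,\R,\C$, confirming your suspicion that an extra hypothesis is needed. Conversely, your zerosumfree guess is sharp: if $a+b=0\Rightarrow a=b=0$ in $\S$, then applying $\partial$ to $\bra{j}\then\ket{i}=\delta_{ij}$ gives $(\partial\bra{j})_i+(\partial\ket{i})_j=0$, forcing each summand to vanish, and the rest of your argument goes through. Your diagnosis of the universal-property route is likewise correct: sum-preservation only shows $F\ket{i},F\bra{i}$ form \emph{a} biproduct cone on $n$, not the chosen one, and the slack is exactly these $\lambda$-twists.
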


\begin{proof}
A derivation on $\mathbf{Mat}_\S$ is uniquely determined by its action on
scalars in $\S$. Conversely, applying a derivation $\partial : \S \to \S$
entrywise on matrices yields a derivation on $\mathbf{Mat}_\S$.
\end{proof}

Fix a monoidal signature $\Sigma$ with objects $\Sigma_0$ and boxes $\Sigma_1$.
Let $\mathbf{C}_\Sigma$ be the free monoidal category it generates:
the objects are types, i.e. lists of generating objects
$t = t_1, \dots, t_n \in \Sigma_0^\star$, the arrows are
string diagrams with boxes in $\Sigma_1$.
Let $\mathbf{C}_\Sigma^+$ be the free monoidal category with sums:
the objects are also given by types, the arrows are formal sums, i.e.
bags\footnote{A bag of $X$, also called a multiset, is a function $X \to \N$.
Addition of bags is done pointwise with unit the constant zero.},
of string diagrams.
We assume our diagrams are interpreted as matrices, i.e. we fix a sum-preserving
monoidal functor $[\![-]\!]  : \mathbf{C}_\Sigma^+ \to \mathbf{Mat}_\S$
for $\S$ a commutative rig with a derivation $\partial : \S \to \S$.
Our main two examples are the standard ZX-calculus with smooth functions
$\R^n \to \R$ as phases and the algebraic ZX-calculus over $\S$,
introduced in \cite{Wang20}.

Applying the dual number construction to $\mathbf{C}_\Sigma^+$,
we get the category of dual diagrams $\D[\mathbf{C}_\Sigma^+]$ which is where
diagrammatic differentiation happens.
By the universal property of $\mathbf{C}_\Sigma^+$, every derivation
$\partial : \mathbf{C}_\Sigma^+ \to \D[\mathbf{C}_\Sigma^+]$ is uniquely
determined by its image on the generating boxes in $\Sigma_1$. Intuitively,
if we're given the derivative for each box, we can compute the derivative
for every sum of diagram using the product rule.
We say that the interpretation $[\![-]\!] : \mathbf{C}_\Sigma^+ \to \mathbf{Mat}_\S$
admits diagrammatic differentiation if there is a derivation $\partial$ on
$\mathbf{C}_\Sigma^+$ such that
$[\![-]\!] \circ \partial = \partial \circ [\![-]\!]$, i.e. the interpretation
of the gradient $[\![\partial d]\!]$ coincides with the gradient of the
interpretation $\partial [\![d]\!]$ for all sums of diagrams $d \in
\mathbf{C}_\Sigma^+$. We depict the gradient $\partial d$ as a
bubble surrounding the diagram $d$, we introduce bubbles formally in
section~\ref{4-bubbles}.
Once translated to string diagrams, the axioms for derivations on monoidal
categories with sums become:
$$\tikzfig{2-1-product-rule}$$

\section{Differentiating ZX}\label{2b-differentiating-zx}

This section applies the dual number construction to the diagrams of the ZX-calculus.

\begin{definition}
The diagrams of the ZX-calculus with smooth maps $\R^n \to \R$ as phases
form a category $\mathbf{ZX}_n = \mathbf{C}_\Sigma$ where
$\Sigma = \{ H : x \to x, \s \sigma : x^{\otimes 2} \to x^{\otimes 2} \}
+ \{ Z^{m, n}(\alpha) : x^{\otimes m} \to x^{\otimes n}
\ \vert \ m, n \in \N, \alpha : \R^n \to \R \}$.
$H$ is depicted as a yellow square, $\sigma$ as a swap and $Z^{m, n}(\alpha)$
as a green spider.
The interpretation $[\![-]\!]  : \mathbf{ZX}_n \to \mathbf{Mat}_\S$
in matrices over $\S = \R^n \to \C$ is given by on objects by $[\![x]\!] = 2$
and on arrows by $[\![H]\!] = \frac{1}{\sqrt{2}} \big(
\ket{0}\bra{0} + \ket{0}\bra{1} + \ket{1}\bra{0} - \ket{1}\bra{1}\big)$,
$[\![\sigma]\!] = \sum_{i,j \in \{ 0, 1 \}} \ket{j, i}\bra{i, j}$
and $[\![Z^{m, n}(\alpha)]\!] =
e^{-i \alpha / 2} \ket{0}^{\otimes n} \bra{0}^{\otimes m}
+ e^{i \alpha / 2} \ket{1}^{\otimes n} \bra{1}^{\otimes m}$.
We write $\mathbf{ZX}_n^+$ for the category of formal sums of parametrised
ZX diagrams.
\end{definition}

\begin{remark}
Note that we've scaled the standard interpretation of the green spider by a global phase
to match the usual definition of rotation gates in quantum circuits.
\end{remark}

\begin{remark}
For $n = 0$ we get $\mathbf{ZX}_0 = \mathbf{ZX}$ the ZX-calculus
with no parameters.
By currying, any ZX diagram $d \in \mathbf{ZX}_n$ can be seen as a function
$d : \R^n \to \text{Ar}(\mathbf{ZX})$ such that
$[\![-]\!] \circ d : \R^n \to \mathbf{Mat}_\C$ is smooth.
\end{remark}

\begin{lemma}\label{lemma-scalars}
A function $s : \R^n \to \C$ can be drawn as a scalar diagram in
$\mathbf{ZX}_n$ if and only if it is bounded.
\end{lemma}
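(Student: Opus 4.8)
The plan is to prove the two implications separately, the forward one being routine and the converse carrying the weight.

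For ``drawable $\Rightarrow$ bounded'': a scalar diagram $d$ contains only finitely many spiders, say with phases $\alpha_1,\dots,\alpha_k:\R^n\to\R$, and $[\![Z^{m,n}(\alpha)]\!]$ depends on $\alpha$ only through the unit-modulus entries $e^{\pm i\alpha/2}$. Since composition and tensor in $\mathbf{Mat}_\S$ are polynomial in the matrix entries, $[\![d]\!](\vec x)$ is obtained by evaluating a fixed Laurent polynomial $P$ --- with finitely many coefficients, read off from the shape of $d$ --- at $\bigl(e^{\pm i\alpha_1(\vec x)/2},\dots,e^{\pm i\alpha_k(\vec x)/2}\bigr)$. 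Every argument has modulus $1$, so $\size{[\![d]\!](\vec x)}$ is at most the sum of the moduli of the coefficients of $P$, uniformly in $\vec x$; put differently, $[\![d]\!]$ factors as a continuous map on the compact torus $(S^1)^k$ and hence has bounded image. (It is moreover smooth, since the $\alpha_j$ are, so a drawable scalar is a bounded smooth function, and I read the converse accordingly --- if one permits arbitrary, not necessarily smooth, phases, the smoothness proviso disappears.)

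For the converse I will first build a toolkit of drawable scalar functions. Tensoring scalar diagrams multiplies their values, and $Z^{0,0}(2\beta)$ has value $2\cos\beta$ for smooth $\beta$; composing $Z^{0,1}(0)\then H$ (value $\sqrt2\ket0$) and $H\then Z^{1,0}(0)$ (value $\sqrt2\bra0$) around $Z^{1,1}(-2\gamma)$ gives a scalar diagram of value $2e^{i\gamma}$, so $2^m e^{i\gamma}$ is drawable for every $m\ge1$ and smooth $\gamma:\R^n\to\R$. Less obviously, any two-qubit diagonal unitary $\mathrm{diag}\bigl(e^{i\gamma_{\vec b}}\bigr)_{\vec b\in\{0,1\}^2}$ is drawable up to a fixed nonzero scalar $c$: build it from single-qubit $Z$-rotations $Z^{1,1}(\cdot)$ and a controlled-phase gate of variable angle (equivalently a $ZZ$-phase gadget --- all standard ZX diagrams, with the angles smooth functions), times a global-phase scalar; the $4\times4$ linear system relating the rotation angles to the target phases $\gamma_{\vec b}$ is nonsingular (a Walsh--Hadamard matrix), so the angles depend smoothly on the $\gamma_{\vec b}$. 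Sandwiching such a diagram $M$, tensored with a real scalar $\lambda=2^m$, as $Z^{0,1}(0)^{\otimes2}\then(\lambda\tensor M)\then Z^{1,0}(0)^{\otimes2}$, and using that $Z^{0,1}(0)^{\otimes2}$ and $Z^{1,0}(0)^{\otimes2}$ have values $\sum_{\vec a}\ket{\vec a}$ and $\sum_{\vec a}\bra{\vec a}$, then yields a scalar diagram of value $\lambda c\sum_{\vec b\in\{0,1\}^2}e^{i\gamma_{\vec b}}$.

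So, given a bounded smooth $s:\R^n\to\C$, I would pick $m$ with $2^m\size{c}>\sup_{\vec x}\size{s(\vec x)}/4$, put $w:=s/(\lambda c)$ (smooth, with $\size{w(\vec x)}<4$ everywhere), write $w=\sum_{j=1}^4 u_j$ with smooth $u_j:\R^n\to S^1$, lift each $u_j$ to a smooth phase $\gamma_j:\R^n\to\R$ (possible because $\R^n$ is contractible), relabel $\gamma_j\leftrightarrow\gamma_{\vec b}$, and read off from the toolkit a scalar diagram of value $\lambda c\,w=s$. The step I expect to be the real obstacle is the decomposition $w=\sum_{j=1}^4 u_j$, i.e.\ the existence of a smooth section of the summation map $\mu:(S^1)^4\to\C$, $\mu(u_1,\dots,u_4)=\sum_j u_j$, over the open disc $\{\size w<4\}$. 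The guiding facts are that $\mu$ is proper and is a submersion at every configuration whose vectors are not all real-collinear, and that such collinear preimages occur only over $\size w\in\{0,2\}$; I would then assemble the section from the explicit two-vector section $z\mapsto\bigl(\tfrac{z}{\size z}e^{i\arccos(\size z/2)},\,\tfrac{z}{\size z}e^{-i\arccos(\size z/2)}\bigr)$ of $\mu$ on the punctured disc $\{0<\size z<2\}$, by writing $w=w_1+w_2$ with $w_1=w/2+\nu(w)$, where $\nu(w)$ is a nonzero constant for $w$ near $0$ and roughly orthogonal to $w$ elsewhere, chosen so that $w_1$ and $w_2=w-w_1$ both stay in $\{0<\size z<2\}$. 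Using four unit vectors rather than two is the crucial move: the two-vector summation map is not submersive over $0$, so admits no continuous section near a zero of $s$, whereas four vectors can sum to $0$ in a non-collinear, and hence stable, way.
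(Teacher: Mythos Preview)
Your forward direction is correct and essentially equivalent to the paper's, though the paper phrases it combinatorially: it splits each of the $k$ spiders into its two basis summands, expanding $d$ into $2^k$ terms, each a product of at most $\tfrac12 kl$ ``bone'' scalars bounded by $2$, whence $|[\![d]\!]|\le 2^{kl}$. Your polynomial-on-a-torus argument reaches the same conclusion more cleanly.

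For the converse the paper takes a far shorter route: it simply invokes the two-phase scalar construction of Coecke and Kissinger (Proposition~8.101 of \emph{Picturing Quantum Processes}), asserting that whenever $|s(\theta)|\le 2^k$ there exist parametrised phases $\alpha,\beta:\R^n\to\R$ realising $s$ as a fixed diagram. Your four-phase construction via a two-qubit diagonal unitary and a smooth section of $(S^1)^4\to\C$ is correct but much heavier. What it buys is precisely the point you flag: the pointwise two-phase recipe passes through $\arg s$ (or something equivalent), which is not even continuous at zeros of $s$, so the paper's $\alpha,\beta$ are not obviously smooth---a wrinkle the paper does not discuss. Your four-summand idea genuinely repairs this. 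That said, you can simplify your own argument substantially: rather than building a section over all of $\{|w|<4\}$, choose $m$ large enough that $\sup|w|<1$, and set $w_1=\tfrac{w}{2}+\tfrac12$, $w_2=\tfrac{w}{2}-\tfrac12$. Both land in the annulus $\{0<|z|<1\}$, on which your explicit two-vector section $z\mapsto\bigl(\tfrac{z}{|z|}e^{\pm i\arccos(|z|/2)}\bigr)$ is manifestly smooth, and the submersion analysis near $|w|\in\{0,2\}$ becomes unnecessary.
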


\begin{proof}
Generalising \cite[P.~8.101]{CoeckeKissinger17} to parametrised scalars,
if there is a $k \in \N$ with $\vert s(\theta) \vert \leq 2^k$ for all
$\theta \in \R^n$ then there are parametrised
phases $\alpha, \beta : \R^n \to \R$ such that

\ctikzfig{2-2-bounded-lemma}

In the other direction, take any scalar diagram $d$ in $\mathbf{ZX}_n$.
Let $k$ be the number of spider in the diagram and $l$ the maximum number
of legs. By decomposing each spider as a sum of two disconnected diagrams,
we can write $d$ as a sum of $2^k$ diagrams. Each term of the sum is a product
of at most $\frac{1}{2} \times k \times l$ bone-shaped scalars. Each bone is
bounded by $2$, thus $[\![d]\!] : \R^n \to \C$ is bounded by $2^{k \times l}$.
\end{proof}

\begin{lemma}\label{lemma-rotations}
In $\mathbf{ZX}_n$, we have
$ \tikzfig{2-3a-lemma-rotation} = \tikzfig{2-3b-lemma-rotation} $
for all affine phases $\alpha : \R^n \to \R$.
\end{lemma}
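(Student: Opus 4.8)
The plan is to define the diagrammatic derivation $\partial$ on $\mathbf{ZX}_n$ by its value on generators and then verify the displayed identity by pushing it through the interpretation functor $[\![-]\!]$. Since $\mathbf{ZX}_n = \mathbf{C}_\Sigma$ is free, a derivation is fixed by what it does to each generating box (section~\ref{2-dual-diagrams}); on $H$ and $\sigma$ we set $\partial H = \partial\sigma = 0$, as they carry no phase, so the only content sits in the value on a green spider $Z^{m,n}(\alpha)$. The claimed value (the ``dz'' side) is the scalar $\tfrac12(\partial\alpha)$ tensored with the green spider $Z^{m,n}(\alpha+\pi)$; the bubble (the ``sz'' side) is notation for $\partial$ applied to the bare spider. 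So what has to be shown is that this assignment is legitimate and compatible with $[\![-]\!]$, i.e.\ that it realises the diagrammatic derivative.

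First I would check that $\tfrac12(\partial\alpha)$ is a genuine scalar of $\mathbf{ZX}_n$: here $\partial\alpha : \R^n \to \R$, and because $\alpha$ is affine, $\partial\alpha$ is a constant (for the partial-derivative derivations on $\S = \R^n \to \C$), hence bounded, so Lemma~\ref{lemma-scalars} provides a scalar diagram realising it. This is precisely where affineness is needed --- for a general smooth phase $\partial\alpha$ need not be bounded, and the naive formula would leave the ZX fragment. Next I would compute both sides under $[\![-]\!]$: the chain rule for a smooth function of dual numbers (cf.\ eq.~\ref{dual-numbers-eq}) gives $\partial(e^{\pm i\alpha/2}) = \pm\tfrac{i}{2}(\partial\alpha)\,e^{\pm i\alpha/2}$, so
\begin{align*}
\partial [\![ Z^{m,n}(\alpha)]\!]
&= -\tfrac{i}{2}(\partial\alpha)\, e^{-i\alpha/2}\ket{0}^{\otimes n}\bra{0}^{\otimes m}
\;+\; \tfrac{i}{2}(\partial\alpha)\, e^{i\alpha/2}\ket{1}^{\otimes n}\bra{1}^{\otimes m}\\
&= \tfrac12(\partial\alpha)\cdot[\![ Z^{m,n}(\alpha+\pi)]\!],
\end{align*}
where the last step uses that the $\pi$-shift sends $e^{\pm i\alpha/2}$ to $\pm i\,e^{\pm i\alpha/2}$, cancelling the factor $i$ from differentiating the exponential. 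Hence $[\![\partial Z^{m,n}(\alpha)]\!] = \partial[\![ Z^{m,n}(\alpha)]\!]$ on this generator (and trivially on $H$ and $\sigma$); since $[\![-]\!]\circ\partial$ and $\partial\circ[\![-]\!]$ are both $\epsilon$-components of sum-preserving monoidal functors lifting $[\![-]\!]$, and such functors are determined by their action on generators, they coincide on all diagrams --- so the ``sz'' and ``dz'' diagrams have the same interpretation, which is the assertion.

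The step carrying the real weight is the well-definedness of the scalar $\tfrac12(\partial\alpha)$ as a ZX diagram, which forces the affine hypothesis and relies on Lemma~\ref{lemma-scalars}; everything else is a routine calculation with the spider interpretation and the chain rule.
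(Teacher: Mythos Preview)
Your proof is correct and follows essentially the same approach as the paper: compute $\partial[\![Z(\alpha)]\!]$ directly, factor the result as $\tfrac{\partial\alpha}{2}$ times the interpretation of the $\pi$-shifted spider, and invoke affineness together with Lemma~\ref{lemma-scalars} to ensure the scalar $\tfrac{\partial\alpha}{2}$ lives in $\mathbf{ZX}_n$. The only difference is one of scope: the paper proves the lemma for the single-qubit case and defers the extension to arbitrary $Z^{m,n}(\alpha)$ (via spider fusion) and the ``determined by generators'' argument to Theorem~\ref{theorem-zx-diag-diff}, whereas you fold all of that into the lemma itself.
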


\begin{proof}
$\partial [\![ Z(\alpha) ]\!]
= \partial \big( e^{-i \alpha / 2} \ket{0}+ e^{i \alpha / 2} \ket{1}\big)
= \frac{i\partial\alpha}{2}\big(-e^{-i \alpha / 2} \ket{0}
+ e^{i\alpha / 2} \ket{1}\big)
= \frac{\partial\alpha}{2}\big(e^{-i\frac{\alpha+\pi}{2}} \ket{0}
+ e^{i\frac{\alpha+\pi}{2}} \ket{1}\big)$.\\
$\alpha$ is affine so $\partial \alpha$ is constant, hence
bounded and from lemma~\ref{lemma-scalars} we know it can be drawn
in $\mathbf{ZX}_n$.
\end{proof}

\begin{theorem}\label{theorem-zx-diag-diff}
The ZX-calculus with affine maps $\R^n \to \R$ as phases admits diagrammatic
differentiation.
\end{theorem}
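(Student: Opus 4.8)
The plan is to exploit the universal property of $\mathbf{ZX}_n^+$ recalled above: we must produce a derivation $\partial$ on $\mathbf{ZX}_n^+$ with $[\![-]\!] \circ \partial = \partial \circ [\![-]\!]$, and such a derivation is determined by its values on the generating boxes, while conversely any choice of parallel ``derivative boxes'' extends uniquely. Fix one of the partial derivatives $\partial = \partial_j$ on $\S = (\R^n \to \C)$. First I would define $\partial$ on generators by $\partial H = 0$ and $\partial \sigma = 0$ (these boxes carry no parameter, and their interpretations are constant matrices), and, guided by Lemma~\ref{lemma-rotations}, by
\[
\partial\, Z^{m,n}(\alpha) \;:=\; s_{\alpha} \tensor Z^{m,n}(\alpha + \pi),
\]
where $s_{\alpha}$ is a scalar diagram representing the constant function $\tfrac12 \partial_j \alpha$. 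Since $\alpha$ is affine, $\partial_j \alpha$ is a real constant, hence bounded, so Lemma~\ref{lemma-scalars} guarantees that $s_\alpha$ exists in $\mathbf{ZX}_n$; this is the only place the affineness hypothesis is used, and it is exactly what keeps the gradient inside the calculus. Note that $\alpha + \pi$ is again affine and $Z^{m,n}(\alpha+\pi)$ has type $x^{\tensor m} \to x^{\tensor n}$, so $\partial\, Z^{m,n}(\alpha)$ is a legitimate arrow parallel to $Z^{m,n}(\alpha)$. By the universal property, this data determines a unique sum-preserving monoidal functor $G : \mathbf{ZX}_n^+ \to \D[\mathbf{ZX}_n^+]$ with $G(b) = b + (\partial b)\epsilon$ on boxes and $\pi_0 \circ G = \id$; projecting on the $\epsilon$-component gives the derivation $\partial$ we want.

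It then remains to check $[\![-]\!] \circ \partial = \partial \circ [\![-]\!]$, where on the right $\partial$ is the entrywise partial derivative on $\mathbf{Mat}_\S$ (a derivation, by the proposition identifying derivations on $\mathbf{Mat}_\S$ with derivations on $\S$). I would rephrase this as the equality of two sum-preserving monoidal functors $\mathbf{ZX}_n^+ \to \D[\mathbf{Mat}_\S]$: the functor $F_1$ that first applies $G$ and then interprets both components, and the functor $F_2 = \big(g \mapsto g + (\partial g)\,\epsilon\big) \circ [\![-]\!]$. Both are composites of sum-preserving monoidal functors, so by the universal property of $\mathbf{ZX}_n^+$ it suffices to check they agree on generators. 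On objects both send $x$ to $2$. On $H$ and $\sigma$ agreement is immediate, since $[\![H]\!]$ and $[\![\sigma]\!]$ are parameter-free, so both functors leave the $\epsilon$-component zero. On $Z^{m,n}(\alpha)$ the content is the computation
\[
\partial [\![ Z^{m,n}(\alpha) ]\!]
\;=\; \tfrac{i\,\partial\alpha}{2}\big( -e^{-i\alpha/2}\ket{0}^{\tensor n}\bra{0}^{\tensor m} + e^{i\alpha/2}\ket{1}^{\tensor n}\bra{1}^{\tensor m}\big)
\;=\; \tfrac{\partial\alpha}{2}\, [\![ Z^{m,n}(\alpha + \pi) ]\!]
\;=\; [\![ \partial\, Z^{m,n}(\alpha) ]\!],
\]
which is the many-legged version of the calculation in the proof of Lemma~\ref{lemma-rotations}, together with $[\![ s_\alpha ]\!] = \tfrac12 \partial_j\alpha$. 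Hence $F_1 = F_2$, and projecting onto the $\epsilon$-component yields $[\![ \partial d ]\!] = \partial [\![ d ]\!]$ for all $d \in \mathbf{ZX}_n^+$, which is diagrammatic differentiation.

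The genuine obstacle is precisely the one already isolated in Lemmas~\ref{lemma-scalars} and~\ref{lemma-rotations}: differentiating $e^{\pm i\alpha/2}$ produces a factor $i/2$ and a sign flip, which Lemma~\ref{lemma-rotations} absorbs into a $+\pi$ phase shift, and the surviving scalar $\partial_j\alpha$ must be bounded to be drawable, which is why the statement is restricted to affine phases (for a general smooth $\alpha$ the scalar $\partial_j\alpha$ need not be bounded and the argument breaks). Everything else is bookkeeping with the universal property and the fact that $H$ and $\sigma$ are parameter-free. As an alternative route for the spiders, one could instead reduce to the state $Z^{0,1}(\alpha)$ of Lemma~\ref{lemma-rotations} by writing each $Z^{m,n}(\alpha)$ as a composite of parameter-free spiders with $Z^{0,1}(\alpha)$ via spider fusion and then applying the product rules, but checking agreement with $[\![-]\!]$ directly on each generator is the cleaner option.
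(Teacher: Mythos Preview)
Your proposal is correct and matches the paper's argument: both set $\partial H = \partial \sigma = 0$ and handle the spiders via the $+\pi$ phase shift of Lemma~\ref{lemma-rotations}, with affineness of $\alpha$ used exactly to make the scalar $\tfrac12\partial\alpha$ drawable via Lemma~\ref{lemma-scalars}. The only cosmetic difference is that the paper reduces the many-legged spider to the one-legged case of Lemma~\ref{lemma-rotations} by spider fusion---precisely the alternative route you sketch in your last paragraph---whereas your main argument verifies $\partial[\![Z^{m,n}(\alpha)]\!] = [\![\partial Z^{m,n}(\alpha)]\!]$ by the direct entrywise computation; the content is identical.
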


\begin{proof}
The Hadamard $H$ and swap $\sigma$ have derivative zero.
For the green spiders, we can extend lemma~\ref{lemma-rotations} from
single qubit rotations to arbitrary many legs using spider fusion:
$$\tikzfig{2-4a-zx-theorem}
= \tikzfig{2-4b-zx-theorem}
= \tikzfig{2-4c-zx-theorem}
= \tikzfig{2-4d-zx-theorem}$$
\end{proof}

Note that there is no diagrammatic differentiation for the ZX-calculus with
smooth maps as phases, even when restricted to bounded functions.
Take for example $\alpha : \R \to \R$ with $\alpha(\theta) = \sin \theta^2$,
it is smooth and bounded by $1$ but its derivative $\partial \alpha$ is
unbounded.
Thus, from lemma~\ref{lemma-scalars} we know it cannot be represented as a
scalar diagram in $\mathbf{ZX}_1$: there can be no diagrammatic
differentiation $\partial : \mathbf{ZX}_1 \to \D[\mathbf{ZX}_1]$.
In such cases, we can always extend the signature by adjoining a new box
for each derivative.

\begin{proposition}
For every interpretation $[\![-]\!] : \mathbf{C}_\Sigma^+ \to \mathbf{Mat}_\S$,
there is an extended signature $\Sigma' \supset \Sigma$
and interpretation $[\![-]\!] : \mathbf{C}_{\Sigma'}^+ \to \mathbf{Mat}_\S$
such that $\mathbf{C}_{\Sigma'}^+$ admits digrammatic differentiation.
\end{proposition}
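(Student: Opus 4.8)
The plan is to enlarge $\Sigma$ with a fresh box for every iterated formal derivative of every generator, to interpret each new box as the corresponding iterated entrywise derivative of the original matrix, and then to read the derivation off from the universal property of $\mathbf{C}_{\Sigma'}^+$. So the statement should hold with essentially no choices to make.

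Concretely, first I would set $\Sigma'_0 = \Sigma_0$ and $\Sigma'_1 = \{\, \partial^k b \mid b \in \Sigma_1,\ k \in \N \,\}$, with the convention $\partial^0 b := b$ so that $\Sigma \subseteq \Sigma'$; the new box $\partial^k b$ is declared to have the same domain and codomain as $b$. Since $\S$ carries a derivation $\partial : \S \to \S$, the proposition on $\mathbf{Mat}_\S$ above supplies an entrywise derivation, still written $\partial$, on $\mathbf{Mat}_\S$. I would then define the extended interpretation on generators by $[\![\partial^k b]\!] := \partial^k [\![b]\!]$ (the $k$-fold entrywise derivative of the matrix interpreting $b$) and leave objects interpreted as before; by the universal property of $\mathbf{C}_{\Sigma'}^+$ this extends uniquely to a sum-preserving monoidal functor $[\![-]\!] : \mathbf{C}_{\Sigma'}^+ \to \mathbf{Mat}_\S$, and it restricts on the subcategory generated by $\Sigma$ to the original interpretation, so $\Sigma'$ and this extended $[\![-]\!]$ are of the required form.

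Next I would define a derivation $\partial$ on $\mathbf{C}_{\Sigma'}^+$ by its values on generators, $\partial(\partial^k b) := \partial^{k+1} b$. Because a derivation is the same thing as a sum-preserving monoidal functor $\mathbf{C}_{\Sigma'}^+ \to \D[\mathbf{C}_{\Sigma'}^+]$ splitting $\pi_0$, and such a functor is freely determined by the images of the generators, this data does define a derivation. It then remains to verify $[\![-]\!] \circ \partial = \partial \circ [\![-]\!]$: on a generator $\partial^k b$ the left side is $[\![\partial^{k+1} b]\!] = \partial^{k+1}[\![b]\!]$ and the right side is $\partial(\partial^k [\![b]\!]) = \partial^{k+1}[\![b]\!]$, so the two sum-preserving monoidal functors $d \mapsto [\![d]\!] + [\![\partial d]\!]\,\epsilon$ and $d \mapsto [\![d]\!] + \partial[\![d]\!]\,\epsilon$ from $\mathbf{C}_{\Sigma'}^+$ to $\D[\mathbf{Mat}_\S]$ agree on all generators, hence coincide by the universal property; comparing $\epsilon$-components gives the equation for every $d$.

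I do not expect a serious obstacle; the construction is forced. The one point that genuinely needs care is that a derivation must be total, so the extension has to be closed under taking formal derivatives — this is exactly why the adjoined boxes are indexed by all $k \in \N$ rather than a single new layer $k = 1$, since $\partial$ must have somewhere to send the boxes it has just introduced. With that in place, the whole argument reduces to the universal property of free monoidal categories with sums together with the functorial packaging of derivations into the dual-number construction $\D[-]$.
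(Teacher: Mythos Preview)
Your proposal is correct and is essentially the same construction as the paper's: the paper sets $\Sigma' = \bigcup_{n \in \N} \Sigma^n$ with $\Sigma^0 = \Sigma$, $\Sigma^{n+1} = \Sigma^n \cup \{\partial f \mid f \in \Sigma^n\}$ and $[\![\partial f]\!] = \partial [\![f]\!]$, which is a different indexing of the same set of iterated formal derivatives you adjoin. You have simply spelled out in more detail the definition of the derivation on generators and the verification of $[\![-]\!] \circ \partial = \partial \circ [\![-]\!]$ via the universal property, which the paper leaves implicit.
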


\begin{proof}
Let $\Sigma' = \cup_{n \in \N} \Sigma^n$ where $\Sigma^0 = \Sigma$
and $\Sigma^{n + 1} = \Sigma^n \cup \{ \partial f \ \vert \ f \in \Sigma^n \}$
with $[\![\partial f]\!] = \partial [\![f]\!]$.
\end{proof}

The issue of being able to represent arbitrary scalars disappears if we work
with the algebraic ZX-calculus instead. Furthermore, we can generalise
from $\S = \R^n \to \C$ to any commutative rig.

\begin{definition}
The diagrams of the algebraic ZX-calculus over a commutative rig $\S$ form a
category $\mathbf{ZX}_\S = \mathbf{C}_\Sigma$ where the signature $\Sigma$ is
given in \cite[Table 2]{Wang20} and the interpretation
is given in \cite[§6]{Wang20}.
In particular, there is a green square $R_Z^{m, n}(a) \in \Sigma_1$ for each $a \in S$
and $m, n \in \N$ with $[\![R_Z^{m, n}(a)]\!] =
\ket{0}^{\otimes n} \bra{0}^{\otimes m}
+ a \ket{1}^{\otimes n} \bra{1}^{\otimes m}$.
Let $\mathbf{ZX}_\S^+$ be the category of formal sums of algebraic ZX
diagrams over $\S$.
\end{definition}

\begin{theorem}
Diagrammatic derivations on $[\![-]\!] : \mathbf{ZX}_\S^+ \to \mathbf{Mat}_\S$
are in one-to-one correspondance with rig derivations $\partial : \S \to \S$.
\end{theorem}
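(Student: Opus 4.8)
The plan is to pull the statement back along the Proposition that derivations on $\mathbf{Mat}_\S$ coincide with rig derivations $\partial:\S\to\S$: it then suffices to match diagrammatic derivations of $[\![-]\!]:\mathbf{ZX}_\S^+\to\mathbf{Mat}_\S$ with derivations on the codomain $\mathbf{Mat}_\S$. A diagrammatic derivation is a derivation $\partial$ on $\mathbf{ZX}_\S^+$ together with a derivation $\partial'$ on $\mathbf{Mat}_\S$ such that $[\![-]\!]\circ\partial=\partial'\circ[\![-]\!]$; restricting to the $\mathbf{Mat}_\S$-component is one direction of the correspondence, and I must construct the reverse. Since $[\![-]\!]$ out of the free category is not faithful, the diagram-side component is only pinned down after passing to the complete axiomatisation of \cite{Wang20a}, so the correspondence is to be read modulo that, i.e.\ up to having the same $\mathbf{Mat}_\S$-component.

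For the reverse direction, start from a rig derivation $\partial:\S\to\S$, extend it entrywise to $\mathbf{Mat}_\S$, and build a derivation on $\mathbf{ZX}_\S^+=\mathbf{C}_\Sigma^+$. Because the latter is free, a derivation on it is determined by its values on the generators $\Sigma_1$, and for the two sum-preserving monoidal functors $[\![-]\!]\circ(\id+\partial(-)\epsilon)$ and $(\id+\partial(-)\epsilon)\circ[\![-]\!]$ out of $\mathbf{C}_\Sigma^+$ to coincide it is enough that $[\![\partial g]\!]=\partial[\![g]\!]$ on each generator $g$. The structural generators (Hadamard, triangle, swap) have interpretations with entries in the image of $\N\hookrightarrow\S$, which every derivation fixes as $\partial 1=0$; so I set $\partial g=0$ there. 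The one non-trivial case is the green box, where $\partial[\![R_Z^{m,n}(a)]\!]=(\partial a)\,\ket{1}^{\otimes n}\bra{1}^{\otimes m}$, and I must realise this matrix as the interpretation of a diagram in $\mathbf{ZX}_\S^+$.

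That realisation is the main obstacle, and it is exactly where the extra expressive power of the \emph{algebraic} ZX-calculus is used — contrast Lemma~\ref{lemma-scalars}, by which the ordinary ZX-calculus with smooth phases cannot even represent all bounded scalars. I would handle it in two steps. First, spider fusion gives $[\![R_Z^{m,n}(a)]\!]=[\![R_Z^{m,1}(1)]\!]\cdot[\![R_Z^{1,1}(a)]\!]\cdot[\![R_Z^{1,n}(1)]\!]$ in $\mathbf{Mat}_\S$, so by the product rule it suffices to specify $\partial R_Z^{1,1}(a)$ with interpretation $(\partial a)\,\ket{1}\bra{1}$ (and $\partial$ zero on the constant boxes $R_Z^{m,1}(1)$, $R_Z^{1,n}(1)$). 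Second, one represents $\ket{1}\bra{1}$ as an algebraic ZX diagram and multiplies by the scalar $\partial a$ via a gadget such as $\bra{1}\then R_Z^{1,1}(\partial a)\then\ket{1}$; composing these yields the required $\partial R_Z^{1,1}(a)$. The delicate point, absent in the ordinary case, is that when $\S$ has no additive inverses some of these sub-diagrams must be rephrased, but the signature of \cite[Table 2]{Wang20a} is arranged so that every element of $\S$ and the projector $\ket{1}\bra{1}$ remain expressible.

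Finally, the two constructions are mutually inverse. Going rig derivation $\to$ diagrammatic derivation $\to$ $\mathbf{Mat}_\S$-component returns the entrywise extension, hence — by the Proposition — the original $\partial$. Conversely, any diagrammatic derivation has $\mathbf{Mat}_\S$-component a genuine derivation (again by the Proposition it is entrywise application of some rig derivation $\partial$), and the diagrammatic derivation built from that $\partial$ agrees with the given one on the interpretation of every generator, hence — modulo the faithfulness caveat of the first paragraph — on all of $\mathbf{ZX}_\S^+$. Apart from the expressibility step, everything is the universal property of free monoidal categories with sums together with dual-number bookkeeping in $\D[-]$.
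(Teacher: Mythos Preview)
Your approach is essentially the paper's: the forward direction in both cases reduces to representing $\partial[\![R_Z^{m,n}(a)]\!]=(\partial a)\ket{1}^{\otimes n}\bra{1}^{\otimes m}$ by a diagram, and the paper simply writes the scalar directly as $R_Z^{1,0}(\partial a)$ applied to $\ket{1}$ rather than routing through spider fusion and Proposition~2.2 as you do. Two minor remarks: your faithfulness caveat is a valid subtlety the paper glosses over, and your claim that $\partial 1=0$ for every rig derivation fails in non-cancellative rigs such as $\B$ (though the paper's formula for $\partial[\![R_Z^{m,n}(a)]\!]$ tacitly makes the same assumption).
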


\begin{proof}
Given a derivation $\partial$ on $\S$, we have
$\partial [\![R_Z^{m, n}(a)]\!]
= (\partial a) \ket{1}^{\otimes n} \bra{1}^{\otimes m}$
and $\partial a$ can be represented by the scalar diagram
$R_Z^{1, 0}(\partial a) \ket{1}$.
In the other direction, a diagrammatic derivation $\partial$ on
$\mathbf{ZX}_\S^+$ is uniquely determined by its action on scalars
$R_Z^{1, 0}(a) \ket{1}$ for $a \in \S$.
\end{proof}

One application of diagrammatic differentiation is to solve
differential equations between diagrams. As a first step,
we apply Stone's theorem \cite{Stone32} on one-parameter unitary groups
to the ZX-calculus.

\begin{definition}
A one-parameter unitary group is a unitary matrix $U : n \to n$
in $\mathbf{Mat}_{\R \to \C}$ with $U(0) = \id_n$ and $U(\theta) U(\theta') = U(\theta + \theta')$
for all $\theta, \theta' \in \R$. It is strongly continuous when
$\lim_{\theta \to \theta_0} U(\theta) = U(\theta_0)$ for all $\theta_0 \in \R$.
We say a one-parameter diagram $d : x^{\otimes n} \to x^{\otimes n}$
is a unitary group if its interpretation $[\![d]\!]$ is.
\end{definition}

\begin{remark}
The interpretation of diagrams with smooth maps as phases must be strongly continuous.
\end{remark}

\begin{theorem}[Stone]
There is a one-to-one correspondance between strongly continuous one-parameter
unitary groups $U : n \to n$ in $\mathbf{Mat}_{\R \to \C}$ and self-adjoint
matrices $H : n \to n$ in $\mathbf{Mat}_{\C}$. The bijection is given
explicitly by $U(\theta) = \exp(i \theta H)$ and $H = - i (\partial U)(0)$,
translated in terms of diagrams with bubbles we get:
\ctikzfig{2-5-stone-theorem}
\end{theorem}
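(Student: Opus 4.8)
The plan is to reduce the statement to the classical finite–dimensional Stone theorem, which in this setting is elementary: since every $H : n \to n$ in $\mathbf{Mat}_\C$ is bounded, there are no domain subtleties, and a strongly continuous homomorphism $\R \to \mathbf{Mat}_\C$ will turn out to be analytic. So the actual work splits into (i) checking that the two assignments $H \mapsto \exp(itH)$ and $U \mapsto -i(\partial U)(0)$ are well defined, (ii) checking they are mutually inverse, and (iii) repackaging the resulting matrix identities as the displayed bubble equation using the diagrammatic differentiation of the previous sections.

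First, the direction $H \mapsto U$. Taking $H = H^\dagger$ and $U(t) := \exp(itH)$ defined by the usual power series, $U(0) = \id_n$ and $U(s)U(t) = U(s+t)$ follow from $[isH, itH] = 0$; unitarity follows from $U(t)^\dagger = \exp(-itH^\dagger) = \exp(-itH) = U(t)^{-1}$; and strong continuity (in fact norm-analyticity) is immediate from the power series. So $U$ is a strongly continuous one-parameter unitary group, i.e.\ its entries define an element of $\mathbf{Mat}_{\R \to \C}$ with the required properties.

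The substantive direction is $U \mapsto H$. Here I would first upgrade strong continuity to differentiability by the standard averaging argument: set $W(a) = \int_0^a U(s)\,ds$; since $W(a)/a \to \id_n$ as $a \to 0^+$, $W(a)$ is invertible for small $a$, and the identity $U(t)W(a) = \int_t^{t+a} U(s)\,ds$ exhibits $U$ as $C^1$, hence $C^\infty$ by bootstrapping through $U(t+h) = U(h)U(t)$. Setting $H := -i\,(\partial U)(0)$, differentiating the (abelian) group law at $h = 0$ gives the linear ODE $(\partial U)(t) = iH\,U(t)$ with initial condition $U(0) = \id_n$, whose unique solution is $\exp(itH)$; hence $U(t) = \exp(itH)$. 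Self-adjointness of $H$ falls out of differentiating $U(t)^\dagger U(t) = \id_n$ at $t = 0$, which gives $(iH)^\dagger + iH = 0$, i.e.\ $H^\dagger = H$. Mutual inversion is then immediate: $-i\bigl(\partial \exp(itH)\bigr)(0) = -i(iH) = H$ in one direction, and uniqueness of solutions of the ODE above in the other.

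Finally, the diagram. Writing $\partial$ as the bubble of section~\ref{4-bubbles} and the matrix exponential as its own bubble, Stone's theorem says precisely that the derivative-bubble of the exponential-bubble, evaluated at $0$ and rescaled by $-i$, collapses both bubbles to the self-adjoint generator, and conversely that exponentiating $-i(\partial U)(0)$ recovers $U$. Since the interpretation functor $[\![-]\!]$ commutes with $\partial$ — which is what it means for a calculus to admit diagrammatic differentiation, as established for $\mathbf{ZX}_n$ in Theorem~\ref{theorem-zx-diag-diff} — this identity of diagrams holds as soon as the corresponding identity of matrices holds, which the preceding paragraphs supply. I expect the one genuinely analytic point, and hence the main obstacle, to be the passage from strong continuity to differentiability; in finite dimensions the averaging trick makes it routine, and everything else is bookkeeping plus the translation into bubbles.
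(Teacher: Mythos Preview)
Your proof is correct and follows the standard route for the finite-dimensional Stone theorem: the averaging trick $W(a)=\int_0^a U(s)\,ds$ to upgrade strong continuity to differentiability, followed by the ODE $(\partial U)(t)=iHU(t)$ and uniqueness, is exactly the textbook argument. However, there is nothing to compare it against: the paper does not prove this theorem. It is stated with attribution to Stone \cite{Stone32} and used as a classical background result; the paper proceeds directly to the corollary about one-parameter diagrams in $\mathbf{ZX}_1$ without supplying any argument. You have written a proof where the paper simply cites one.

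One minor point: your third paragraph invokes Theorem~\ref{theorem-zx-diag-diff} and the commutation of $[\![-]\!]$ with $\partial$ to justify the displayed bubble equation, but this is more machinery than needed. The theorem as stated lives entirely in $\mathbf{Mat}_{\R\to\C}$, and the exponential bubble is introduced purely as graphical notation for the matrix exponential --- the displayed diagram \emph{is} the matrix identity $U(t)=\exp(itH)$, just drawn. No appeal to ZX diagrammatic differentiation is required here; that only becomes relevant in the corollary that follows.
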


\begin{corollary}
A one-parameter diagram $d : x^{\otimes n} \to x^{\otimes n}$ in
$\mathbf{ZX}_1$ is a unitary group if and only if there is a constant
self-adjoint diagram
$h : x^{\otimes n} \to x^{\otimes n}$ such that $\partial d = i h \then d$.
\end{corollary}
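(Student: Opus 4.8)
The plan is to read this corollary as a diagrammatic restatement of Stone's theorem, reducing both directions to properties of the interpretation $U := [\![d]\!] : \R \to \mathbf{Mat}_\C$, which is smooth --- hence strongly continuous --- because $[\![-]\!] \circ d$ is.

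For the \emph{only if} direction, suppose $d$ is a unitary group, so that $U$ is a strongly continuous one-parameter unitary group. Stone's theorem gives $U(t) = \exp(i t H)$ for the self-adjoint matrix $H = -i (\partial U)(0)$, and differentiating yields $\partial U = (iH)\,U = U\,(iH)$, the order being immaterial since $iH$ commutes with $\exp(itH)$. Now $iH$ is a constant matrix, so its entries are bounded functions; by Lemma~\ref{lemma-scalars} and the universality of the ZX-calculus there is a parameter-free diagram $h \in \mathbf{ZX} \subseteq \mathbf{ZX}_1$ with $[\![h]\!] = iH$. Then $[\![h \then d]\!] = [\![h]\!]\,[\![d]\!] = \partial[\![d]\!]$, which by completeness of the ZX-calculus (or, if one prefers, read at the level of interpretations) is exactly the equation $\partial d = h \then d$.

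For the \emph{if} direction, suppose $h$ is constant with $\partial d = h \then d$, and set $A := [\![h]\!]$. Applying $[\![-]\!]$ turns the hypothesis into a linear constant-coefficient differential equation, whose solution is $U(t) = \exp(t A)\,U(0)$ up to the order of the factors. Using the normalisation $U(0) = \id$ contained in the definition of a one-parameter group, we get $U(t) = \exp(tA)$, which automatically obeys the group law $U(s)U(t) = U(s+t)$ and is strongly continuous; differentiating $U(t) U(t)^\dagger = \id$ at $t = 0$ forces $A^\dagger = -A$, and conversely an anti-Hermitian $A$ makes every $\exp(tA)$ unitary (since $\exp(tA)^\dagger = \exp(-tA) = \exp(tA)^{-1}$). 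Hence $U$ is a strongly continuous one-parameter unitary group, i.e. $d$ is a unitary group, with Stone generator $H = -i A = -i(\partial U)(0)$.

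I expect the \emph{if} direction to be the delicate step: the equation $\partial d = h \then d$ alone pins $U$ down only up to its value at $0$ and says nothing about unitarity, so the argument must genuinely invoke $U(0) = \id$ together with the anti-Hermiticity of $[\![h]\!]$ (equivalently, the standing assumption that $d$ is unitary-valued). A lesser point is the appeal to universality in the \emph{only if} direction: representing the scalar entries of $iH$ exactly requires Lemma~\ref{lemma-scalars}, and since $\mathbf{ZX}_1$ with arbitrary smooth phases need not literally admit diagrammatic differentiation, $\partial d$ should be understood either in a signature extension (as in the preceding proposition) or as the semantic derivative $\partial[\![d]\!]$.
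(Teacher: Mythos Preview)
The paper gives no explicit proof of this corollary: it is stated immediately after Stone's theorem and followed directly by examples, so the intended argument is precisely your reduction to Stone's theorem at the level of the interpretation $U = [\![d]\!]$. Your write-up is more careful than what the paper provides.

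In particular, you are right to flag the \emph{if} direction as delicate. The hypothesis $\partial d = h \then d$ with $h$ constant yields only $\partial U = A U$ with $A$ constant, hence $U(t) = \exp(tA)\,U(0)$; this forces neither $U(0) = \id$ nor unitarity of $U(t)$. A concrete witness: take $d(t) = Z(t + \pi)$, which satisfies $\partial d = h \then d$ for a constant $h$ (by Lemma~\ref{lemma-rotations}) yet has $[\![d]\!](0) \neq \id$, so $d$ is not a unitary group in the paper's sense. The corollary therefore needs the standing assumptions you name --- $d(0) = \id$ and $d$ unitary-valued (equivalently $[\![h]\!]$ anti-Hermitian) --- and the paper leaves these implicit. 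Your closing remark on the status of $\partial d$ in $\mathbf{ZX}_1$ (signature extension versus semantic derivative) is likewise a point the paper does not make explicit.
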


\begin{proof}
Given the diagram for a unitary group $d$, we compute its diagrammatic
differentiation $\partial d$ and get $h$ by pattern matching.
Conversely given a self-adjoint $h$, the diagram $d = \exp(i \theta h)$
is a unitary group.
\end{proof}

\begin{example}
Let $d = R_z(\alpha) \otimes R_x(\alpha)$ for a smooth $\alpha : \R \to \R$,
then the following implies $d(\theta) = \exp(i \theta h)$\linebreak
$\tikzfig{2-6a-simple-example}
= \tikzfig{2-6b-simple-example}
+ \tikzfig{2-6c-simple-example} \quad$
for $h = - i \frac{\partial \alpha}{2}(Z \otimes I + I \otimes X)$.
\end{example}

\begin{example}
Let $d = P(\alpha, ZX)$ be a Pauli gadget as defined in \cite[def.~4.1]{CowtanEtAl20a} then
the following implies
$d(\theta) = \exp(i \theta h)$ for $h = -i \frac{\partial \alpha}{2} Z \otimes X$.
$\tikzfig{2-7a-pauli-gadget}
= \tikzfig{2-7b-pauli-gadget}
= \tikzfig{2-7c-pauli-gadget}$
\end{example}


\section{Differentiating quantum circuits}\label{3-dual-circuits}

In this section, we extend diagrammatic differentiation to classical-quantum
circuits. These circuit diagrams have two kinds of wires for bits and qubits,
and boxes for pure quantum processes, measurements and preparations.
We interpret these classical-quantum circuits in terms of parametrised matrices,
where the tensor product reorders the indices to keep the classical and quantum
dimensions in order. Borrowing the term from Coecke and Kissinger
\cite{CoeckeKissinger17}, we call these matrices cq-maps.
In this context, diagrammatic derivations correspond to the notion of gradient
recipe for parametrised quantum gates \cite{SchuldEtAl19}.

We first give the definition of parametrised cq-maps which is at the basis
of our Python implementation.
The category $\mathbf{CQMap}_n$ has objects given by pairs of natural numbers
$\text{Ob}(\mathbf{CQMap}_n) = \N \times \N$, where the first and second element
of the pair encode the classical and the quantum dimension of the system respectively.
Arrows $f : (a, b) \to (c, d)$ are given by $a \times b^2 \to c \times d^2$
parametrised complex matrices, i.e. with entries in $\R^n \to \C$.
Composition of cq-maps is given by multiplying their underlying matrices.
Tensor is given on objects by pointwise multiplication and on arrows by the
following diagram in $\Mat_{\R^n \to \C}$:
$$\tikzfig{3-1-tensor}$$

Each pure map $f : a \to b$ in $\Mat_{\R^n \to \C}$ embeds as a
cq-map $(1, a) \to (1, b)$ by ``doubling'', i.e. tensoring with its complex
conjugate $f \mapsto \bar{f} \otimes f$.
Note that doubling is faithful up to a global phase.
For each dimension $a \in \N$, there are distinguished cq-maps
$M_a : (1, a) \to (a, 1), \s E_a : (a, 1) \to (1, a)$ for measurement and
preparation in the computational basis with matrices given by
$M_a = \sum_{i < a} \ket{i} \bra{i, i}$
and $E_a = \sum_{i < a} \ket{i, i} \bra{i}$.
The sum of two cq-maps is given by entrywise addition of their underlying
matrix. Note that doubling does not preserve sums,
i.e. $\overline{(\sum_i f_i)} \otimes (\sum_i f_i)
\neq \sum_i (\overline{f_i} \otimes f_i)$.
In quantum mechanical terms, this corresponds to the distinction between
quantum superposition and probabilistic mixing.

\begin{remark}
The cq-maps we have defined here differ from \cite{CoeckeKissinger17} in two
minor ways. First, we take the algebraic conjugate rather than the diagrammatic
conjugate, i.e. we take $\overline{f \otimes g}
= \overline{f} \otimes \overline{g} \neq \overline{g} \otimes \overline{f}$.
This is just a choice of convention that makes numerical computation easier.
Second, our category $\mathbf{CQMap}$ contains matrices that have no physical
interpretation, e.g. we do not ask for complete positivity. This can be fixed
by considering the subcategory in the image of the interpretation functor
defined below.
\end{remark}

Take a monoidal signature $\Sigma$ with one object $\Sigma_0 = \{ q \}$
interpreted as a qubit, and boxes interpreted as pure quantum processes with
$n$ parameters. That is, we fix a parametrised interpretation functor $[\![-]\!]
: \mathbf{C}_\Sigma \to \mathbf{Mat}_{\R^n \to \C}$ with $[\![q]\!] = 2$.
This could be the signatures for parametrised or algebraic ZX
from the previous section, or any universal quantum gate set plus boxes
for scalars, bras and kets.
We define an extended signature $cq(\Sigma) \supset \Sigma$ with two objects
$cq(\Sigma)_0 = \{ c, q \}$ interpreted as bit and qubit respectively.
Boxes are given by $cq(\Sigma)_1 =
\{ \hat{f} : q^{\otimes a} \to q^{\otimes b} \ \vert \ f \in \Sigma_1 \}
+ \{ M : q \to c, \s E : c \to q \}$.
Let $\mathbf{C}_{cq(\Sigma)}$ be the free monoidal category it generates,
i.e. arrows are classical-quantum circuits.
Their interpretation is given by a monoidal functor
$[\![-]\!] : \mathbf{C}_{cq(\Sigma)} \to \mathbf{CQMap}_n$ with
$[\![c]\!] = (2, 1)$ and $[\![q]\!] = (1, 2)$ on objects.
On arrows we define $[\![M]\!] = M_2$, $[\![E]\!] = E_2$ and
$[\![\hat{f}]\!] = \overline{[\![f]\!]} \otimes [\![f]\!]$.
We write $cq(\mathbf{ZX}_n)$ for the category of classical-quantum circuits
with parametrised ZX diagrams as pure processes.

Let $\mathbf{C}_{cq(\Sigma)}^+$ be the free monoidal category with sums,
i.e. arrows are bags of circuits.
Again, we want to find a diagrammatic derivation
$\partial : \mathbf{C}_{cq(\Sigma)}^+ \to \D[\mathbf{C}_{cq(\Sigma)}^+]$
which commutes with the interpretation, i.e. such that
$[\![\partial \hat{f}]\!] = \partial [\![\hat{f}]\!] =
\partial \big( \overline{[\![f]\!]} \otimes [\![f]\!] \big)$
for all pure maps $f \in \Sigma_1$.
Note that a diagrammatic derivation for pure processes in
$\mathbf{C}_{\Sigma}^+$ does not in general lift to one for classical-quantum
circuits in $\mathbf{C}_{\Sigma}$. Indeed, using the product rule we get
$\partial \big( \overline{[\![f]\!]} \otimes [\![f]\!] \big)
\s = \s \partial \overline{[\![f]\!]} \otimes [\![f]\!]
\ + \ \overline{[\![f]\!]} \otimes \partial [\![f]\!]
\s \neq \s \overline{[\![\partial f]\!]} \otimes [\![\partial f]\!]$.

Hence we need equations, called gradient recipes, to rewrite the gradient of a
pure map $\partial [\![\hat{f}]\!]$ as the pure map of a gradient
$[\![\partial \hat{f}]\!]$.
In the special case of Hermitian operators with at most two unique eigenvalues,
gradient recipes are given by the parameter-shift rule. In the general case
where the parameter-shift rule does not apply, gradient recipes require the
introduction of an ancilla qubit.

\begin{theorem}[Schuld et al.]
For a one-parameter unitary group $f$ with
$[\![f(\theta)]\!] = \exp (i \theta H)$, if $H$ has at most two eigenvalues
$\pm r$, then there is a shift $s \in [0, 2 \pi)$ such that
$[\![r\big(f(\theta + s) - f(\theta - s)\big)]\!] = \partial [\![f(\theta)]\!]$.
\end{theorem}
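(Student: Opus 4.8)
The plan is to prove this as a short matrix computation, namely a finite‑difference version of Euler's formula. Since $\exp(i\theta H)$ is unitary for all $\theta$, its generator $H$ is Hermitian (this is exactly the situation of Stone's theorem above), hence diagonalisable with real spectrum; combined with the hypothesis that every eigenvalue lies in $\{+r,-r\}$, this gives $H = r P_+ - r P_-$ for complementary orthogonal projections $P_\pm$, and therefore $H^2 = r^2\,\id$. (Restricting to the symmetric eigenvalues $\pm r$ costs nothing beyond a global phase: a generic Hermitian operator with two eigenvalues differs from such an $H$ by a real scalar multiple of $\id$, which only rescales $f$ by a phase.)

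From $H^2 = r^2\,\id$ the exponential series collapses term by term into the generalised Euler formula
\[
[\![f(\theta)]\!] \;=\; \exp(i\theta H) \;=\; \cos(r\theta)\,\id \;+\; \tfrac{i}{r}\,\sin(r\theta)\,H,
\]
with the convention $\tfrac{1}{r}\sin(r\theta)\mapsto\theta$ at $r=0$. Differentiating entrywise in $\theta$ then yields $\partial[\![f(\theta)]\!] = iH\exp(i\theta H) = -r\sin(r\theta)\,\id + i\cos(r\theta)\,H$.

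Next I would evaluate the shifted difference straight from the Euler formula, using the sum‑to‑product identities $\cos(A{+}B)-\cos(A{-}B) = -2\sin A\sin B$ and $\sin(A{+}B)-\sin(A{-}B) = 2\cos A\sin B$ with $A = r\theta$ and $B = rs$. This collapses to
\[
[\![f(\theta{+}s) - f(\theta{-}s)]\!] \;=\; \frac{2\sin(rs)}{r}\Bigl(-r\sin(r\theta)\,\id + i\cos(r\theta)\,H\Bigr) \;=\; \frac{2\sin(rs)}{r}\,\partial[\![f(\theta)]\!],
\]
hence $[\![r\bigl(f(\theta{+}s) - f(\theta{-}s)\bigr)]\!] = 2\sin(rs)\,\partial[\![f(\theta)]\!]$ for every $s$. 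Choosing any $s$ with $2\sin(rs)=1$ — the principal value being $s = \tfrac{\pi}{6r}$ — gives precisely $\partial[\![f(\theta)]\!]$, which is the claim; when $r=0$ both sides vanish identically and any $s$ works.

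I expect the only delicate point to be the bookkeeping on $s$, not any part of the computation: $s = \tfrac{\pi}{6r}$ lands in $[0,2\pi)$ exactly when $r > \tfrac{1}{12}$, which comfortably covers the cases of interest — e.g.\ the $Z$‑rotation spider has $H = -Z/2$, so $r = \tfrac12$ and $s = \tfrac{\pi}{3}$ — and the precise value of that interval bound plays no role in the argument. Everything else is the two lines of algebra around $H^2 = r^2\,\id$ together with elementary trigonometry.
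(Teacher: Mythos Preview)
Your computation is correct and more self-contained than the paper's, which simply states the shift value and refers to the Taylor expansion in Schuld et al.\ rather than carrying out the calculation. Your route via $H^2=r^2\,\id$ and the Euler formula is the natural elementary argument.

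There is, however, a genuine discrepancy worth flagging: you obtain $s=\tfrac{\pi}{6r}$, while the paper asserts $s=\tfrac{\pi}{4r}$. Your value is the right one for the statement \emph{as written}, about the pure linear map $[\![f(\theta)]\!]=\exp(i\theta H)$: plugging $s=\tfrac{\pi}{4r}$ into your identity gives $2\sin(rs)=\sqrt{2}\neq 1$. The paper's value is instead correct for the doubled (cq) map $\overline{[\![f]\!]}\otimes[\![f]\!]$, equivalently for expectation values, which is what Schuld et al.\ actually prove and what the subsequent corollary uses. In that setting $\langle B\rangle(\theta)$ has frequency $2r$ rather than $r$, the condition becomes $\sin(2rs)=1$, and one recovers $s=\tfrac{\pi}{4r}$; this also explains why the corollary's diagram shows shifts of $\pm\tfrac{\pi}{2}$ for the $Z$-rotation rather than the $\pm\tfrac{\pi}{3}$ your pure-map calculation would predict. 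So the paper's theorem statement and its proof/application are slightly out of sync, and your argument exposes that. Your caveat about the interval $[0,2\pi)$ is also well taken and applies equally to the paper's value.
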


\begin{proof}
The shift is given by $s = \frac{\pi}{4 r}$, see the Taylor expansion given in
\cite[Theorem 1]{SchuldEtAl19}.
\end{proof}

\begin{corollary}
Classical-quantum circuits $cq(\mathbf{ZX}_n)$ with parametrised ZX diagrams as
pure processes admit diagrammatic differentiation.
\end{corollary}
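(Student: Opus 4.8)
The plan is to build a derivation $\partial$ on $\mathbf{C}_{cq(\Sigma)}^+$, with $\Sigma$ the affine ZX signature, by specifying it on generators and invoking the universal property; commutation with $[\![-]\!]\colon\mathbf{C}_{cq(\Sigma)}^+\to\mathbf{CQMap}_n$ then only has to be checked there. The preparation $E$, the measurement $M$, and the doubled non-parametrised boxes $\hat H,\hat\sigma$ all have parameter-independent interpretations ($E_2$, $M_2$, $\overline{[\![H]\!]}\otimes[\![H]\!]$, $\overline{[\![\sigma]\!]}\otimes[\![\sigma]\!]$), so I would set $\partial E=\partial M=\partial\hat H=\partial\hat\sigma=0$, for which commutation is immediate. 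Everything hinges on the doubled green spiders $\hat Z^{m,n}(\alpha)$ with $\alpha\colon\R^n\to\R$ affine.

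For these I would apply the product rule for $\partial$ on cq-maps to $[\![\hat Z^{m,n}(\alpha)]\!]=\overline{[\![Z^{m,n}(\alpha)]\!]}\otimes[\![Z^{m,n}(\alpha)]\!]$, obtaining $\partial[\![\hat Z^{m,n}(\alpha)]\!]=\overline{\partial[\![Z^{m,n}(\alpha)]\!]}\otimes[\![Z^{m,n}(\alpha)]\!]+\overline{[\![Z^{m,n}(\alpha)]\!]}\otimes\partial[\![Z^{m,n}(\alpha)]\!]$, then substitute the pure gradient $\partial[\![Z^{m,n}(\alpha)]\!]=\tfrac{\partial\alpha}{2}[\![Z^{m,n}(\alpha+\pi)]\!]$ from Lemma~\ref{lemma-rotations} and its many-legged extension in Theorem~\ref{theorem-zx-diag-diff} (here $\partial\alpha$ is a constant since $\alpha$ is affine, and real, hence fixed by conjugation). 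The key computation is then the identity $\overline{[\![Z^{m,n}(\alpha+\pi)]\!]}\otimes[\![Z^{m,n}(\alpha)]\!]+\overline{[\![Z^{m,n}(\alpha)]\!]}\otimes[\![Z^{m,n}(\alpha+\pi)]\!]=[\![\widehat{Z^{m,n}(\alpha+\tfrac{\pi}{2})}]\!]-[\![\widehat{Z^{m,n}(\alpha-\tfrac{\pi}{2})}]\!]$: the $\ket0^{\otimes n}\bra0^{\otimes m}$-on-$\ket0^{\otimes n}\bra0^{\otimes m}$ and $\ket1^{\otimes n}\bra1^{\otimes m}$-on-$\ket1^{\otimes n}\bra1^{\otimes m}$ contributions cancel, leaving only the off-diagonal terms, and this is exactly the parameter-shift gradient recipe of the preceding theorem applied to the doubled spider (a one-parameter unitary group when $m=n=1$; the general case reduces to this by fusing $Z^{m,n}(\alpha)$ into $Z^{1,1}(\alpha)$ flanked by constant spiders, whose derivatives vanish). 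Defining $\partial\hat Z^{m,n}(\alpha):=\tfrac{\partial\alpha}{2}\,\widehat{Z^{m,n}(\alpha+\tfrac{\pi}{2})}-\tfrac{\partial\alpha}{2}\,\widehat{Z^{m,n}(\alpha-\tfrac{\pi}{2})}$ inside $cq(\mathbf{ZX}_n)$ makes $[\![-]\!]\circ\partial=\partial\circ[\![-]\!]$ hold on generators, hence everywhere.

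The step I expect to be the real obstacle is this recombination: the cross-terms coming out of the product rule, such as $\overline{[\![Z^{m,n}(\alpha+\pi)]\!]}\otimes[\![Z^{m,n}(\alpha)]\!]$, are \emph{not} the doubling of any pure ZX diagram and so lie a priori outside the image of $cq(\mathbf{ZX}_n)$ — this is precisely why a diagrammatic derivation on pure ZX does not simply lift to classical--quantum circuits, and it is where the parameter-shift rule does the essential work. A secondary subtlety is the minus sign in $\widehat{Z^{m,n}(\alpha+\tfrac{\pi}{2})}-\widehat{Z^{m,n}(\alpha-\tfrac{\pi}{2})}$: since arrows of $\mathbf{C}_{cq(\Sigma)}^+$ are bags and carry no negatives, the difference has to be accommodated either by passing to formal $\Z$-linear combinations of circuits on the syntactic side, or, following the general gradient-recipe pattern discussed before the theorem, by adjoining an ancilla qubit whose measurement outcomes make the two shifted circuits contribute with opposite signs. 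With that bookkeeping in place, verifying the composition and tensor product rules for $\partial$ is routine.
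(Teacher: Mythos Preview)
Your proposal is correct and follows essentially the same route as the paper: apply the parameter-shift rule to the doubled two-legged spider, then extend to arbitrary arity by spider fusion, and handle the minus sign by enriching in $\Z$ or adjoining a $-1$ scalar (the paper mentions exactly these two options in the remark following the proof; the ancilla workaround you mention is reserved for gates where the two-eigenvalue hypothesis fails). The only difference is cosmetic: where the paper simply observes that the $Z$ rotation has two eigenvalues and invokes the preceding Schuld et al.\ theorem to obtain the shift, you instead verify the identity $\overline{[\![Z(\alpha+\pi)]\!]}\otimes[\![Z(\alpha)]\!]+\overline{[\![Z(\alpha)]\!]}\otimes[\![Z(\alpha+\pi)]\!]=[\![\hat Z(\alpha+\tfrac{\pi}{2})]\!]-[\![\hat Z(\alpha-\tfrac{\pi}{2})]\!]$ by direct computation of the four diagonal entries --- a self-contained check that avoids appealing to the general eigenvalue criterion but lands on the same formula.
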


\begin{proof}
The $Z$ rotation has eigenvalues $\pm 1$, hence the spiders with two legs have
diagrammatic differentiation given by the parameter-shift rule:

\ctikzfig{3-2-param-shift}

As for theorem~\ref{theorem-zx-diag-diff}, this extends to
arbitrary-many legs using spider fusion.
\end{proof}

\begin{remark}
All scalars in $cq(\mathbf{ZX}_n)$ are non-negative real numbers. Thus in order
to encode the substraction of the parameter shift-rule diagrammatically, we
need either to consider formal sums with minus signs (a.k.a. enrichment in
Abelian groups) or simply to extend the signature with the $-1$ scalar.
\end{remark}

\begin{example}
The quantum enhanced feature spaces of \cite{HavlicekEtAl19} are parametrised
classical-quantum circuits.
The quantum classifier can be drawn as a diagram:

\ctikzfig{3-3-quantum-enhanced}

where $U(\vec{x})$ depends on the input, $W(\vec{\theta})$ depends on the
trainable parameters and $f$ is a fixed Boolean function encoded as a linear map.
\end{example}


\section{Bubbles and the chain rule} \label{4-bubbles}

This section introduces an extension of the language of string diagrams
that encodes arbitrary non-linear operators on matrices: bubbles.
Previous sections already used two kinds of bubbles informally:
matrix exponentials and gradients. We give a formal definition of bubbles and
their gradients with the chain rule. We then use them to compute the gradient
of hybrid classical-quantum circuits where the measurement results can be
post-processed by any classical feed-forward neural network.

Fix a set of colours $C$.
Take a monoidal signature $\Sigma$, we construct the free monoidal category with
sums and bubbles $\mathbf{C}_{\beta(\Sigma)}^+$, i.e. arrows are formal sums of
diagrams with bubbles. We define the signature of bubbled diagrams
as a union $\beta(\Sigma) = \bigcup_{n \in \N} \beta(\Sigma, n)$ where the
signature of $(\leq n)$-nested bubbles $\beta(\Sigma, n)$ is defined by
induction:
$$
\beta(\Sigma, 0) \s = \s \Sigma \qquad \text{and} \qquad
\beta(\Sigma, n + 1) \s = \s \big\{\beta^c(d) : x \to y \s \vert \s
c \in C, \s d : x \to y \in \mathbf{C}_{\beta(\Sigma, n)}^+ \big\}$$
That is, we put a formal sum of diagrams $d \in \mathbf{C}_{\beta(\Sigma, n)}^+$
with $(\leq n)$-nested bubbles inside a $c$-coloured bubble and take it as a box
$\beta^c(d) \in \beta(\Sigma, n + 1)$ for diagrams with $(n + 1)$-nested
bubbles.
We say a monoidal category $\mathbf{C}$ has bubbles when it comes equipped with
a unary operator on homsets $\beta^c :
\coprod_{x, y} \mathbf{C}(x, y) \to \mathbf{C}(x, y)$ for each colour $c \in C$.

Although it makes the bureaucracy heavier, we may consider bubbles that change
the domain and codomain of the diagram inside. Such a bubble is defined by two
operators on objects
$\beta^c_\dom, \beta^c_\cod : \text{Ob}(\mathbf{C}) \to \text{Ob}(\mathbf{C})$
and an operator on homsets $\beta^c : \coprod_{x, y}
\mathbf{C}(x, y) \to \mathbf{C}(\beta^c_\dom(x), \beta^c_\cod(y))$.

\begin{example}
Bubbles first appear in Penrose and Rindler \cite{PenroseRindler84}
where they are used to encode the covariant derivative. An extra wire comes in
the bubble to encode the dimension of the tangent vector.
\end{example}

\begin{example}
The functorial boxes of Melli\`es \cite{Mellies06} can be thought of as
well-behaved bubbles, i.e. such that the composition of bubbles is the
bubble of the composition. Indeed, a functor $F : \mathbf{C} \to \mathbf{D}$
between two categories $\mathbf{C}$ and $\mathbf{D}$ defines a bubble on the
subcategory of their coproduct $\mathbf{C} \coprod \mathbf{D}$ spanned by $\mathbf{C}$.
\end{example}

\begin{example}
Bubbles appear under the name ``uooh'' (unary operator on homsets) in
\cite{HaydonSobocinski20} where they are used to encode the sep lines of
C.S.Peirce's existential graphs.
Take the predicates of a first-order logic as signature, i.e. one
generating object $x$ and each predicate $P$ with arity $k$ as a box with
$\dom(P) = 1$ and $\cod(P) = x^{\otimes k}$. Add generators for spiders to
encode lines of identity.
Then bubbled diagrams encode first-order logic formulae, and every formula can
be represented in this way. Logical deduction rules may be given entirely
in terms of diagrammatic rules.
The evaluation of first-order logic formulae is a bubble-preserving functor
$F : \mathbf{C}_{B(\Sigma)} \to \Mat_\B$, where bubbles are interpreted
as pointwise negation.
\end{example}

\begin{example}
Take colours to be arbitrary rig-valued functions $\S \to \S$, then
the category of matrices $\mathbf{Mat}_\S$ has bubbles given by pointwise
application. Gradient bubbles $\partial : \S \to \S$ are a special case.
\end{example}

\begin{example}
In the subcategory of square matrices, matrix exponential is an example of
bubble for $\S = \R, \C$. When $\S = \B$, square matrices are finite
graphs and reflexive transitive closure is an example.
\end{example}

\begin{example}
Bubbles can encode the standard non-linear operators used in machine learning.
The sigmoid $\sigma(x) = 1 / (1 + e^{-x})$ and rectified linear unit
$\sigma(x) = \max(0, x)$ are pointwise bubbles $\sigma : \R \to \R$.
The softmax function $\sigma : \R^n \to \R^n$ takes a vector $\vec{x}$,
applies exponential pointwise then normalises by $\sum_{i<n} e^{\vec{x}_i}$.
It can be drawn as a bubble around the diagram for the vector $\vec{x}$.
Bubbles may also depend on the labels from the dataset.
Take a loss function such as the relative entropy $l(\vec{y}, \vec{y}^\star)
= \sum_{i < m} \vec{y}_i \log(\vec{y}_i / \vec{y}^\star_i)$.
The partially-applied loss function $l(-, \vec{y}^\star) : \R^m \to \R$ for the
label $\vec{y}^\star \in \R^m$ can be drawn as a bubble around the diagram for
the prediction $\vec{y} \in \R^m$.
\end{example}

Bubbles compose by nesting, this defines a category of post-processes
$pp(\mathbf{C})$. The objects are pairs of objects from $\mathbf{C}$, arrows
$(x, y) \to (x', y')$ are $c$-coloured bubbles such that
$\beta^c_\dom(x) = x'$ and $\beta^c_\cod(y) = y'$.
If we apply this to the category of matrices, $pp(\mathbf{Mat}_\S)$ is the
category of all matrix-valued functions. In particular, this includes
any feed-forward neural networks. Indeed, take
$f = f_n \circ \dots \circ f_1 : \R^a \to \R^b$
where each layer is given by $f_i(\vec{x_i}) = \sigma(W_i \vec{x_i} + \beta_i)$
for the input vector $\vec{x_i} : 1 \to a_i$,
the parametrised weight matrix $W_i : a_i \to b_i$
and bias vector $\beta_i : 1 \to b_i$ in $\mathbf{Mat}_{\R^n \to \R}$
for $n$ the total number of parameters.
Drawing both the layer $f_i$ and the activation $\sigma : \R \to \R$ as bubbles
we get the following definition:
$$\tikzfig{4-1-neural-network}$$

When the bubble $\beta$ has a derivative $\partial \beta$, we may define the
gradient of bubbled diagrams with the chain rule
$\partial(\beta(f)) = (\partial \beta)(f) \times \partial f$.
In order to make sense of the multiplication, we assume that the homsets of our
category $\mathbf{C}$ have a product on homsets which is compatible with the
sum, i.e. each homset forms a rig\footnote{
We do not assume that products are compatible with composition,
in other words $\mathbf{C}$ need not be rig-enriched.}
and which commutes with the tensor,
i.e. $(f \times f') \otimes (g \times g')
= (f \otimes g) \times (f' \otimes g')$.
The category of matrices $\mathbf{Mat}_\S$ over a rig $\S$ is an example, each
homset $\mathbf{Mat}_\S(m, n)$ is a rig with entrywise sums and products.
Another example is the category of diagrams with spiders on each object, where
the product is given by pre/post-composition with the co/monoid structure.
We get the following equation:
$$\tikzfig{4-2-chain-rule}$$
For scalar diagrams, spiders are empty diagrams and the
equation simplifies to the usual chain rule.

Thus, we can draw both a parametrised quantum circuit and its classical
post-processing as one bubbled diagram in $cq(\mathbf{ZX}_n)$. By applying the
product rule to the quantum circuit and the chain rule to its post-processing,
we can compute a diagram for the overall gradient. This applies to
parametrised quantum circuits seen as machine learning models
\cite{BenedettiEtAl19}, to the patterns of measurement-based quantum
computing seen as ZX-diagrams \cite{DuncanPerdrix10} as well as the quantum
natural language processing of \cite{MeichanetzidisEtAl20a}.


\section*{Conclusion, implementation \& future work}

We introduced diagrammatic differentiation for tensor calculus, using bubbles
to represent the partial derivative of a subdiagram. The product rule
allows to compute the gradient of a diagram from the gradient of its boxes.
Applying this to ZX diagrams, we showed how to compute the gradient of any linear
map with respect to a phase parameter. We then extended this to quantum circuits
with the parameter-shift rule and to neural networks with the chain rule.

Although this work focused on the theoretical foundations of diagrammatic
differentiation, we briefly describe its implementation as
part of the open-source DisCoPy library \cite{DeFeliceEtAl20}. A notebook
with examples is available in the documentation\footnote{
\href{https://discopy.readthedocs.io/en/main/notebooks/diag-diff.html}{
https://discopy.readthedocs.io/en/main/notebooks/diag-diff.html}}.
The \texttt{cqmap} module implements classical-quantum maps as NumPy arrays
\cite{VanDerWaltEtAl11}, with SymPy \cite{MeurerEtAl17} symbols as parameters.
The two modules \texttt{zx} and \texttt{circuit} build upon \texttt{monoidal},
the implementation of diagrams in monoidal categories. They both come with
an \texttt{eval} method which evaluates a diagram as a NumPy array and a
\texttt{grad} method which returns a formal sum of diagrams given a SymPy symbol.
The \texttt{zx} module comes with back-and-forth translations with the PyZX
library \cite{KissingerVanDeWetering19} for automated diagram simplification.
The \texttt{circuit} module interfaces with the tket compiler \cite{SivarajahEtAl20},
allowing to execute the diagrams for circuits and their gradient on quantum hardware.

For now, we have only defined gradients of diagrams with respect to one
parameter at a time. In future work, we plan to extend our definition to
compute the Jacobian of a tensor with respect to a vector of variables.
Other promising directions for research include the study of diagrammatic
differential equations, as well as a definition of integration for diagrams.

\section*{Acknowledgements}

The authors would like to thank the members of the Oxford quantum group for
their insightful feedback.
Special thanks go to Stefano Gogioso for developing the idea of
diagrammatic differentiation with RY during his MSc project \cite{Yeung20}.
We thank the QPL reviewers for their constructive feedback which improved the
presentation of this work.
We also thank Nicola Mariella for his contribution
to the implementation. AT thanks Simon Harrison for the Wolfson Harrison Quantum
Foundation Scholarship.

\bibliographystyle{eptcs}
\bibliography{main}

\begin{thebibliography}{10}
\providecommand{\bibitemdeclare}[2]{}
\providecommand{\surnamestart}{}
\providecommand{\surnameend}{}
\providecommand{\urlprefix}{Available at }
\providecommand{\url}[1]{\texttt{#1}}
\providecommand{\href}[2]{\texttt{#2}}
\providecommand{\urlalt}[2]{\href{#1}{#2}}
\providecommand{\doi}[1]{doi:\urlalt{http://dx.doi.org/#1}{#1}}
\providecommand{\bibinfo}[2]{#2}

\bibitemdeclare{article}{AbramskyCoecke08}
\bibitem{AbramskyCoecke08}
\bibinfo{author}{Samson \surnamestart Abramsky\surnameend} \&
  \bibinfo{author}{Bob \surnamestart Coecke\surnameend} (\bibinfo{year}{2008}):
  \emph{\bibinfo{title}{Categorical Quantum Mechanics}}.
\newblock {\sl \bibinfo{journal}{\href{https://arxiv.org/abs/0808.1023}{\tt
  arXiv:0808.1023 [quant-ph]}}}.

\bibitemdeclare{article}{BackensEtAl20}
\bibitem{BackensEtAl20}
\bibinfo{author}{Miriam \surnamestart Backens\surnameend},
  \bibinfo{author}{Hector \surnamestart {Miller-Bakewell}\surnameend},
  \bibinfo{author}{Giovanni \surnamestart {de Felice}\surnameend},
  \bibinfo{author}{Leo \surnamestart Lobski\surnameend} \&
  \bibinfo{author}{John \surnamestart {van de Wetering}\surnameend}
  (\bibinfo{year}{2020}): \emph{\bibinfo{title}{There and Back Again: {{A}}
  Circuit Extraction Tale}}.
\newblock {\sl \bibinfo{journal}{\href{https://arxiv.org/abs/2003.01664}{\tt
  arXiv:2003.01664 [quant-ph]}}}.

\bibitemdeclare{article}{BenedettiEtAl19}
\bibitem{BenedettiEtAl19}
\bibinfo{author}{Marcello \surnamestart Benedetti\surnameend},
  \bibinfo{author}{Erika \surnamestart Lloyd\surnameend} \&
  \bibinfo{author}{Stefan \surnamestart Sack\surnameend}
  (\bibinfo{year}{2019}): \emph{\bibinfo{title}{Parameterized Quantum Circuits
  as Machine Learning Models}}.
\newblock {\sl \bibinfo{journal}{\href{https://arxiv.org/abs/1906.07682}{\tt
  arXiv:1906.07682 [quant-ph]}}}.

\bibitemdeclare{article}{BluteEtAl06}
\bibitem{BluteEtAl06}
\bibinfo{author}{R.~F. \surnamestart Blute\surnameend},
  \bibinfo{author}{J.~R.~B. \surnamestart Cockett\surnameend} \&
  \bibinfo{author}{R.~A.~G. \surnamestart Seely\surnameend}
  (\bibinfo{year}{2006}): \emph{\bibinfo{title}{Differential Categories}}.
\newblock {\sl \bibinfo{journal}{Mathematical Structures in Computer Science}}
  \bibinfo{volume}{16}(\bibinfo{number}{06}), p. \bibinfo{pages}{1049},
  \doi{10.1017/S0960129506005676}.

\bibitemdeclare{article}{ChancellorEtAl18}
\bibitem{ChancellorEtAl18}
\bibinfo{author}{Nicholas \surnamestart Chancellor\surnameend},
  \bibinfo{author}{Aleks \surnamestart Kissinger\surnameend},
  \bibinfo{author}{Joschka \surnamestart Roffe\surnameend},
  \bibinfo{author}{Stefan \surnamestart Zohren\surnameend} \&
  \bibinfo{author}{Dominic \surnamestart Horsman\surnameend}
  (\bibinfo{year}{2018}): \emph{\bibinfo{title}{Graphical {{Structures}} for
  {{Design}} and {{Verification}} of {{Quantum Error Correction}}}}.
\newblock {\sl \bibinfo{journal}{\href{https://arxiv.org/abs/1611.08012}{\tt
  arXiv:1611.08012 [quant-ph]}}}.

\bibitemdeclare{book}{Clifford73}
\bibitem{Clifford73}
\bibinfo{author}{William~Kingdon \surnamestart Clifford\surnameend}
  (\bibinfo{year}{1873}): \emph{\bibinfo{title}{A Preliminary Sketch of
  Biquaternions}}.

\bibitemdeclare{article}{CockettEtAl19}
\bibitem{CockettEtAl19}
\bibinfo{author}{Robin \surnamestart Cockett\surnameend},
  \bibinfo{author}{Geoffrey \surnamestart Cruttwell\surnameend},
  \bibinfo{author}{Jonathan \surnamestart Gallagher\surnameend},
  \bibinfo{author}{Jean-Simon~Pacaud \surnamestart Lemay\surnameend},
  \bibinfo{author}{Benjamin \surnamestart MacAdam\surnameend},
  \bibinfo{author}{Gordon \surnamestart Plotkin\surnameend} \&
  \bibinfo{author}{Dorette \surnamestart Pronk\surnameend}
  (\bibinfo{year}{2019}): \emph{\bibinfo{title}{Reverse Derivative
  Categories}}.
\newblock {\sl \bibinfo{journal}{\href{https://arxiv.org/abs/1910.07065}{\tt
  arXiv:1910.07065 [cs, math]}}}.

\bibitemdeclare{article}{CoeckeEtAl20}
\bibitem{CoeckeEtAl20}
\bibinfo{author}{Bob \surnamestart Coecke\surnameend},
  \bibinfo{author}{Giovanni \surnamestart {de Felice}\surnameend},
  \bibinfo{author}{Konstantinos \surnamestart Meichanetzidis\surnameend} \&
  \bibinfo{author}{Alexis \surnamestart Toumi\surnameend}
  (\bibinfo{year}{2020}): \emph{\bibinfo{title}{Foundations for {{Near}}-{{Term
  Quantum Natural Language Processing}}}}.
\newblock {\sl \bibinfo{journal}{ArXiv e-prints}}.

\bibitemdeclare{inproceedings}{CoeckeDuncan08}
\bibitem{CoeckeDuncan08}
\bibinfo{author}{Bob \surnamestart Coecke\surnameend} \& \bibinfo{author}{Ross
  \surnamestart Duncan\surnameend} (\bibinfo{year}{2008}):
  \emph{\bibinfo{title}{Interacting {{Quantum Observables}}}}.
\newblock In \bibinfo{editor}{Luca \surnamestart Aceto\surnameend},
  \bibinfo{editor}{Ivan \surnamestart Damg{\aa}rd\surnameend},
  \bibinfo{editor}{Leslie~Ann \surnamestart Goldberg\surnameend},
  \bibinfo{editor}{Magn{\'u}s~M. \surnamestart Halld{\'o}rsson\surnameend},
  \bibinfo{editor}{Anna \surnamestart Ing{\'o}lfsd{\'o}ttir\surnameend} \&
  \bibinfo{editor}{Igor \surnamestart Walukiewicz\surnameend}, editors: {\sl
  \bibinfo{booktitle}{Automata, {{Languages}} and {{Programming}}}},
  \bibinfo{series}{Lecture {{Notes}} in {{Computer Science}}},
  \bibinfo{publisher}{{Springer Berlin Heidelberg}}, pp.
  \bibinfo{pages}{298--310}, \doi{10.1016/0022-4049(80)90101-2}.

\bibitemdeclare{book}{CoeckeKissinger17}
\bibitem{CoeckeKissinger17}
\bibinfo{author}{Bob \surnamestart Coecke\surnameend} \& \bibinfo{author}{Aleks
  \surnamestart Kissinger\surnameend} (\bibinfo{year}{2017}):
  \emph{\bibinfo{title}{Picturing {{Quantum Processes}}: {{A First Course}} in
  {{Quantum Theory}} and {{Diagrammatic Reasoning}}}}.
\newblock \bibinfo{publisher}{{Cambridge University Press}},
  \bibinfo{address}{{Cambridge}}, \doi{10.1017/9781316219317}.

\bibitemdeclare{article}{CowtanEtAl20a}
\bibitem{CowtanEtAl20a}
\bibinfo{author}{Alexander \surnamestart Cowtan\surnameend},
  \bibinfo{author}{Silas \surnamestart Dilkes\surnameend},
  \bibinfo{author}{Ross \surnamestart Duncan\surnameend}, \bibinfo{author}{Will
  \surnamestart Simmons\surnameend} \& \bibinfo{author}{Seyon \surnamestart
  Sivarajah\surnameend} (\bibinfo{year}{2020}): \emph{\bibinfo{title}{Phase
  {{Gadget Synthesis}} for {{Shallow Circuits}}}}.
\newblock {\sl \bibinfo{journal}{Electronic Proceedings in Theoretical Computer
  Science}} \bibinfo{volume}{318}, pp. \bibinfo{pages}{213--228},
  \doi{10.4204/EPTCS.318.13}.

\bibitemdeclare{article}{CowtanEtAl20}
\bibitem{CowtanEtAl20}
\bibinfo{author}{Alexander \surnamestart Cowtan\surnameend},
  \bibinfo{author}{Will \surnamestart Simmons\surnameend} \&
  \bibinfo{author}{Ross \surnamestart Duncan\surnameend}
  (\bibinfo{year}{2020}): \emph{\bibinfo{title}{A {{Generic Compilation
  Strategy}} for the {{Unitary Coupled Cluster Ansatz}}}}.
\newblock {\sl \bibinfo{journal}{\href{https://arxiv.org/abs/2007.10515}{\tt
  arXiv:2007.10515 [quant-ph]}}}.

\bibitemdeclare{article}{CruttwellEtAl21}
\bibitem{CruttwellEtAl21}
\bibinfo{author}{G.~S.~H. \surnamestart Cruttwell\surnameend},
  \bibinfo{author}{Bruno \surnamestart Gavranovi{\'c}\surnameend},
  \bibinfo{author}{Neil \surnamestart Ghani\surnameend}, \bibinfo{author}{Paul
  \surnamestart Wilson\surnameend} \& \bibinfo{author}{Fabio \surnamestart
  Zanasi\surnameend} (\bibinfo{year}{2021}): \emph{\bibinfo{title}{Categorical
  {{Foundations}} of {{Gradient}}-{{Based Learning}}}}.
\newblock {\sl \bibinfo{journal}{\href{https://arxiv.org/abs/2103.01931}{\tt
  arXiv:2103.01931 [cs, math]}}}.

\bibitemdeclare{article}{DeBeaudrapEtAl20}
\bibitem{DeBeaudrapEtAl20}
\bibinfo{author}{Niel \surnamestart {de Beaudrap}\surnameend},
  \bibinfo{author}{Xiaoning \surnamestart Bian\surnameend} \&
  \bibinfo{author}{Quanlong \surnamestart Wang\surnameend}
  (\bibinfo{year}{2020}): \emph{\bibinfo{title}{Fast and Effective Techniques
  for {{T}}-Count Reduction via Spider Nest Identities}}.
\newblock {\sl \bibinfo{journal}{\href{https://arxiv.org/abs/2004.05164}{\tt
  arXiv:2004.05164 [quant-ph]}}}.

\bibitemdeclare{inproceedings}{DeFeliceEtAl20}
\bibitem{DeFeliceEtAl20}
\bibinfo{author}{Giovanni \surnamestart {de Felice}\surnameend},
  \bibinfo{author}{Alexis \surnamestart Toumi\surnameend} \&
  \bibinfo{author}{Bob \surnamestart Coecke\surnameend} (\bibinfo{year}{2020}):
  \emph{\bibinfo{title}{{{DisCoPy}}: {{Monoidal Categories}} in {{Python}}}}.
\newblock In: {\sl \bibinfo{booktitle}{Proceedings of the 3rd Annual
  International Applied Category Theory Conference, {{ACT}}}},
  \bibinfo{volume}{333}, \bibinfo{publisher}{{EPTCS}},
  \doi{10.4204/EPTCS.333.13}.

\bibitemdeclare{article}{DeGriendDuncan20}
\bibitem{DeGriendDuncan20}
\bibinfo{author}{Arianne Meijer-van \surnamestart {de Griend}\surnameend} \&
  \bibinfo{author}{Ross \surnamestart Duncan\surnameend}
  (\bibinfo{year}{2020}): \emph{\bibinfo{title}{Architecture-Aware Synthesis of
  Phase Polynomials for {{NISQ}} Devices}}.
\newblock {\sl \bibinfo{journal}{\href{https://arxiv.org/abs/2004.06052}{\tt
  arXiv:2004.06052 [quant-ph]}}}.

\bibitemdeclare{article}{DuncanEtAl20}
\bibitem{DuncanEtAl20}
\bibinfo{author}{Ross \surnamestart Duncan\surnameend}, \bibinfo{author}{Aleks
  \surnamestart Kissinger\surnameend}, \bibinfo{author}{Simon \surnamestart
  Perdrix\surnameend} \& \bibinfo{author}{John \surnamestart {van de
  Wetering}\surnameend} (\bibinfo{year}{2020}):
  \emph{\bibinfo{title}{Graph-Theoretic {{Simplification}} of {{Quantum
  Circuits}} with the {{ZX}}-Calculus}}.
\newblock {\sl \bibinfo{journal}{Quantum}} \bibinfo{volume}{4}, p.
  \bibinfo{pages}{279}, \doi{10.22331/q-2020-06-04-279}.

\bibitemdeclare{inproceedings}{DuncanPerdrix10}
\bibitem{DuncanPerdrix10}
\bibinfo{author}{Ross \surnamestart Duncan\surnameend} \&
  \bibinfo{author}{Simon \surnamestart Perdrix\surnameend}
  (\bibinfo{year}{2010}): \emph{\bibinfo{title}{Rewriting Measurement-Based
  Quantum Computations with Generalised Flow}}.
\newblock In: {\sl \bibinfo{booktitle}{International Colloquium on Automata,
  Languages, and Programming}}, \bibinfo{organization}{{Springer}}, pp.
  \bibinfo{pages}{285--296}, \doi{10.1007/s10472-009-9141-x}.

\bibitemdeclare{article}{GidneyFowler19}
\bibitem{GidneyFowler19}
\bibinfo{author}{Craig \surnamestart Gidney\surnameend} \&
  \bibinfo{author}{Austin~G. \surnamestart Fowler\surnameend}
  (\bibinfo{year}{2019}): \emph{\bibinfo{title}{Flexible Layout of Surface Code
  Computations Using {{AutoCCZ}} States}}.
\newblock {\sl \bibinfo{journal}{\href{https://arxiv.org/abs/1905.08916}{\tt
  arXiv:1905.08916 [quant-ph]}}}.

\bibitemdeclare{inproceedings}{HadzihasanovicEtAl18}
\bibitem{HadzihasanovicEtAl18}
\bibinfo{author}{Amar \surnamestart Hadzihasanovic\surnameend},
  \bibinfo{author}{Kang~Feng \surnamestart Ng\surnameend} \&
  \bibinfo{author}{Quanlong \surnamestart Wang\surnameend}
  (\bibinfo{year}{2018}): \emph{\bibinfo{title}{Two {{Complete
  Axiomatisations}} of {{Pure}}-State {{Qubit Quantum Computing}}}}.
\newblock In: {\sl \bibinfo{booktitle}{Proceedings of the 33rd {{Annual
  ACM}}/{{IEEE Symposium}} on {{Logic}} in {{Computer Science}}}},
  \bibinfo{series}{{{LICS}} '18}, \bibinfo{publisher}{{ACM}},
  \bibinfo{address}{{New York, NY, USA}}, pp. \bibinfo{pages}{502--511},
  \doi{10.1145/3209108.3209128}.

\bibitemdeclare{article}{HavlicekEtAl19}
\bibitem{HavlicekEtAl19}
\bibinfo{author}{Vojtech \surnamestart Havlicek\surnameend},
  \bibinfo{author}{Antonio~D. \surnamestart C{\'o}rcoles\surnameend},
  \bibinfo{author}{Kristan \surnamestart Temme\surnameend},
  \bibinfo{author}{Aram~W. \surnamestart Harrow\surnameend},
  \bibinfo{author}{Abhinav \surnamestart Kandala\surnameend},
  \bibinfo{author}{Jerry~M. \surnamestart Chow\surnameend} \&
  \bibinfo{author}{Jay~M. \surnamestart Gambetta\surnameend}
  (\bibinfo{year}{2019}): \emph{\bibinfo{title}{Supervised Learning with
  Quantum Enhanced Feature Spaces}}.
\newblock {\sl \bibinfo{journal}{Nature}}
  \bibinfo{volume}{567}(\bibinfo{number}{7747}), pp. \bibinfo{pages}{209--212},
  \doi{10.1038/s41586-019-0980-2}.

\bibitemdeclare{article}{HaydonSobocinski20}
\bibitem{HaydonSobocinski20}
\bibinfo{author}{Nathan \surnamestart Haydon\surnameend} \&
  \bibinfo{author}{Pawel \surnamestart Sobocinski\surnameend}
  (\bibinfo{year}{2020}): \emph{\bibinfo{title}{Compositional {{Diagrammatic
  First}}-{{Order Logic}}}}, p.~\bibinfo{pages}{16}.

\bibitemdeclare{article}{Hoffmann16}
\bibitem{Hoffmann16}
\bibinfo{author}{Philipp H.~W. \surnamestart Hoffmann\surnameend}
  (\bibinfo{year}{2016}): \emph{\bibinfo{title}{A {{Hitchhiker}}'s {{Guide}} to
  {{Automatic Differentiation}}}}.
\newblock {\sl \bibinfo{journal}{Numerical Algorithms}}
  \bibinfo{volume}{72}(\bibinfo{number}{3}), pp. \bibinfo{pages}{775--811},
  \doi{10.1007/s11075-015-0067-6}.

\bibitemdeclare{inproceedings}{JeandelEtAl18a}
\bibitem{JeandelEtAl18a}
\bibinfo{author}{Emmanuel \surnamestart Jeandel\surnameend},
  \bibinfo{author}{Simon \surnamestart Perdrix\surnameend} \&
  \bibinfo{author}{Renaud \surnamestart Vilmart\surnameend}
  (\bibinfo{year}{2018}): \emph{\bibinfo{title}{A {{Complete Axiomatisation}}
  of the {{ZX}}-{{Calculus}} for {{Clifford}}+{{T Quantum Mechanics}}}}.
\newblock In: {\sl \bibinfo{booktitle}{Proceedings of the 33rd {{Annual
  ACM}}/{{IEEE Symposium}} on {{Logic}} in {{Computer Science}}}},
  \bibinfo{series}{{{LICS}} '18}, \bibinfo{publisher}{{ACM}},
  \bibinfo{address}{{New York, NY, USA}}, pp. \bibinfo{pages}{559--568},
  \doi{10.1145/3209108.3209131}.

\bibitemdeclare{article}{JoyalStreet88}
\bibitem{JoyalStreet88}
\bibinfo{author}{Andr{\'e} \surnamestart Joyal\surnameend} \&
  \bibinfo{author}{Ross \surnamestart Street\surnameend}
  (\bibinfo{year}{1988}): \emph{\bibinfo{title}{Planar Diagrams and Tensor
  Algebra}}.
\newblock {\sl \bibinfo{journal}{Unpublished manuscript, available from Ross
  Street's website}}.

\bibitemdeclare{article}{JoyalStreet91}
\bibitem{JoyalStreet91}
\bibinfo{author}{Andr{\'e} \surnamestart Joyal\surnameend} \&
  \bibinfo{author}{Ross \surnamestart Street\surnameend}
  (\bibinfo{year}{1991}): \emph{\bibinfo{title}{The Geometry of Tensor
  Calculus, {{I}}}}.
\newblock {\sl \bibinfo{journal}{Advances in Mathematics}}
  \bibinfo{volume}{88}(\bibinfo{number}{1}), pp. \bibinfo{pages}{55--112},
  \doi{10.1016/0001-8708(91)90003-P}.

\bibitemdeclare{article}{KimEtAl20}
\bibitem{KimEtAl20}
\bibinfo{author}{Joon-Hwi \surnamestart Kim\surnameend},
  \bibinfo{author}{Maverick S.~H. \surnamestart Oh\surnameend} \&
  \bibinfo{author}{Keun-Young \surnamestart Kim\surnameend}
  (\bibinfo{year}{2020}): \emph{\bibinfo{title}{Boosting {{Vector Calculus}}
  with the {{Graphical Notation}}}}.
\newblock {\sl \bibinfo{journal}{\href{https://arxiv.org/abs/1911.00892}{\tt
  arXiv:1911.00892 [hep-th, physics:physics]}}}.

\bibitemdeclare{article}{KissingerVanDeWetering19}
\bibitem{KissingerVanDeWetering19}
\bibinfo{author}{Aleks \surnamestart Kissinger\surnameend} \&
  \bibinfo{author}{John \surnamestart {van de Wetering}\surnameend}
  (\bibinfo{year}{2019}): \emph{\bibinfo{title}{{{PyZX}}: {{Large Scale
  Automated Diagrammatic Reasoning}}}}.
\newblock {\sl
  \bibinfo{journal}{\href{https://arxiv.org/abs/1904.04735}{arXiv:1904.04735
  [quant-ph]}}}.

\bibitemdeclare{article}{KissingerVanDeWetering20}
\bibitem{KissingerVanDeWetering20}
\bibinfo{author}{Aleks \surnamestart Kissinger\surnameend} \&
  \bibinfo{author}{John \surnamestart {van de Wetering}\surnameend}
  (\bibinfo{year}{2020}): \emph{\bibinfo{title}{Reducing {{T}}-Count with the
  {{ZX}}-Calculus}}.
\newblock {\sl \bibinfo{journal}{Physical Review A}}
  \bibinfo{volume}{102}(\bibinfo{number}{2}), p. \bibinfo{pages}{022406},
  \doi{10.1103/PhysRevA.102.022406}.

\bibitemdeclare{inproceedings}{MeichanetzidisEtAl20a}
\bibitem{MeichanetzidisEtAl20a}
\bibinfo{author}{Konstantinos \surnamestart Meichanetzidis\surnameend},
  \bibinfo{author}{Stefano \surnamestart Gogioso\surnameend},
  \bibinfo{author}{Giovanni \surnamestart De~Felice\surnameend},
  \bibinfo{author}{Nicol{\`o} \surnamestart Chiappori\surnameend},
  \bibinfo{author}{Alexis \surnamestart Toumi\surnameend} \&
  \bibinfo{author}{Bob \surnamestart Coecke\surnameend} (\bibinfo{year}{2020}):
  \emph{\bibinfo{title}{Quantum {{Natural Language Processing}} on
  {{Near}}-{{Term Quantum Computers}}}}.
\newblock In: {\sl \bibinfo{booktitle}{Quantum {{Physics}} and {{Logic}}
  ({{QPL}})}}.

\bibitemdeclare{inproceedings}{Mellies06}
\bibitem{Mellies06}
\bibinfo{author}{Paul-Andr{\'e} \surnamestart Melli{\`e}s\surnameend}
  (\bibinfo{year}{2006}): \emph{\bibinfo{title}{Functorial {{Boxes}} in
  {{String Diagrams}}}}.
\newblock In \bibinfo{editor}{Zolt{\'a}n \surnamestart {\'E}sik\surnameend},
  editor: {\sl \bibinfo{booktitle}{Computer {{Science Logic}}}},
  \bibinfo{series}{Lecture {{Notes}} in {{Computer Science}}},
  \bibinfo{publisher}{{Springer Berlin Heidelberg}}, pp.
  \bibinfo{pages}{1--30}, \doi{10.1023/A:1024247613677}.

\bibitemdeclare{article}{MeurerEtAl17}
\bibitem{MeurerEtAl17}
\bibinfo{author}{Aaron \surnamestart Meurer\surnameend},
  \bibinfo{author}{Christopher~P. \surnamestart Smith\surnameend},
  \bibinfo{author}{Mateusz \surnamestart Paprocki\surnameend},
  \bibinfo{author}{Ond{\v r}ej \surnamestart {\v C}ert{\'i}k\surnameend},
  \bibinfo{author}{Sergey~B. \surnamestart Kirpichev\surnameend},
  \bibinfo{author}{Matthew \surnamestart Rocklin\surnameend},
  \bibinfo{author}{AMiT \surnamestart Kumar\surnameend},
  \bibinfo{author}{Sergiu \surnamestart Ivanov\surnameend},
  \bibinfo{author}{Jason~K. \surnamestart Moore\surnameend},
  \bibinfo{author}{Sartaj \surnamestart Singh\surnameend},
  \bibinfo{author}{Thilina \surnamestart Rathnayake\surnameend},
  \bibinfo{author}{Sean \surnamestart Vig\surnameend},
  \bibinfo{author}{Brian~E. \surnamestart Granger\surnameend},
  \bibinfo{author}{Richard~P. \surnamestart Muller\surnameend},
  \bibinfo{author}{Francesco \surnamestart Bonazzi\surnameend},
  \bibinfo{author}{Harsh \surnamestart Gupta\surnameend},
  \bibinfo{author}{Shivam \surnamestart Vats\surnameend},
  \bibinfo{author}{Fredrik \surnamestart Johansson\surnameend},
  \bibinfo{author}{Fabian \surnamestart Pedregosa\surnameend},
  \bibinfo{author}{Matthew~J. \surnamestart Curry\surnameend},
  \bibinfo{author}{Andy~R. \surnamestart Terrel\surnameend},
  \bibinfo{author}{{\v S}t{\v e}p{\'a}n \surnamestart Rou{\v c}ka\surnameend},
  \bibinfo{author}{Ashutosh \surnamestart Saboo\surnameend},
  \bibinfo{author}{Isuru \surnamestart Fernando\surnameend},
  \bibinfo{author}{Sumith \surnamestart Kulal\surnameend},
  \bibinfo{author}{Robert \surnamestart Cimrman\surnameend} \&
  \bibinfo{author}{Anthony \surnamestart Scopatz\surnameend}
  (\bibinfo{year}{2017}): \emph{\bibinfo{title}{{{SymPy}}: Symbolic Computing
  in {{Python}}}}.
\newblock {\sl \bibinfo{journal}{PeerJ Computer Science}} \bibinfo{volume}{3},
  p. \bibinfo{pages}{e103}, \doi{10.7717/peerj-cs.103}.

\bibitemdeclare{article}{Penrose71}
\bibitem{Penrose71}
\bibinfo{author}{Roger \surnamestart Penrose\surnameend}
  (\bibinfo{year}{1971}): \emph{\bibinfo{title}{{Applications of Negative
  Dimensional Tensors}}}.
\newblock {\sl \bibinfo{journal}{Scribd}}.

\bibitemdeclare{book}{PenroseRindler84}
\bibitem{PenroseRindler84}
\bibinfo{author}{Roger \surnamestart Penrose\surnameend} \&
  \bibinfo{author}{Wolfgang \surnamestart Rindler\surnameend}
  (\bibinfo{year}{1984}): \emph{\bibinfo{title}{Spinors and {{Space}}-{{Time}}:
  {{Volume}} 1: {{Two}}-{{Spinor Calculus}} and {{Relativistic Fields}}}}.
\newblock {\sl \bibinfo{series}{Cambridge {{Monographs}} on {{Mathematical
  Physics}}}}~\bibinfo{volume}{1}, \bibinfo{publisher}{{Cambridge University
  Press}}, \bibinfo{address}{{Cambridge}}, \doi{10.1017/CBO9780511564048}.

\bibitemdeclare{article}{SchuldEtAl19}
\bibitem{SchuldEtAl19}
\bibinfo{author}{Maria \surnamestart Schuld\surnameend}, \bibinfo{author}{Ville
  \surnamestart Bergholm\surnameend}, \bibinfo{author}{Christian \surnamestart
  Gogolin\surnameend}, \bibinfo{author}{Josh \surnamestart Izaac\surnameend} \&
  \bibinfo{author}{Nathan \surnamestart Killoran\surnameend}
  (\bibinfo{year}{2019}): \emph{\bibinfo{title}{Evaluating Analytic Gradients
  on Quantum Hardware}}.
\newblock {\sl \bibinfo{journal}{Physical Review A}}
  \bibinfo{volume}{99}(\bibinfo{number}{3}), p. \bibinfo{pages}{032331},
  \doi{10.1103/PhysRevA.99.032331}.

\bibitemdeclare{article}{SivarajahEtAl20}
\bibitem{SivarajahEtAl20}
\bibinfo{author}{Seyon \surnamestart Sivarajah\surnameend},
  \bibinfo{author}{Silas \surnamestart Dilkes\surnameend},
  \bibinfo{author}{Alexander \surnamestart Cowtan\surnameend},
  \bibinfo{author}{Will \surnamestart Simmons\surnameend},
  \bibinfo{author}{Alec \surnamestart Edgington\surnameend} \&
  \bibinfo{author}{Ross \surnamestart Duncan\surnameend}
  (\bibinfo{year}{2020}): \emph{\bibinfo{title}{Tket : {{A Retargetable
  Compiler}} for {{NISQ Devices}}}}.
\newblock {\sl \bibinfo{journal}{\href{https://arxiv.org/abs/2003.10611}{\tt
  arXiv:2003.10611 [quant-ph]}}}.

\bibitemdeclare{article}{Stone32}
\bibitem{Stone32}
\bibinfo{author}{M.~H. \surnamestart Stone\surnameend} (\bibinfo{year}{1932}):
  \emph{\bibinfo{title}{On One-Parameter Unitary Groups in Hilbert Space}}.
\newblock {\sl \bibinfo{journal}{Annals of Mathematics}}
  \bibinfo{volume}{33}(\bibinfo{number}{3}), pp. \bibinfo{pages}{643--648},
  \doi{10.2307/1968538}.

\bibitemdeclare{article}{VanDerWaltEtAl11}
\bibitem{VanDerWaltEtAl11}
\bibinfo{author}{Stefan \surnamestart {van der Walt}\surnameend},
  \bibinfo{author}{S.~Chris \surnamestart Colbert\surnameend} \&
  \bibinfo{author}{Gael \surnamestart Varoquaux\surnameend}
  (\bibinfo{year}{2011}): \emph{\bibinfo{title}{The {{NumPy Array}}: {{A
  Structure}} for {{Efficient Numerical Computation}}}}.
\newblock {\sl \bibinfo{journal}{Computing in Science Engineering}}
  \bibinfo{volume}{13}(\bibinfo{number}{2}), pp. \bibinfo{pages}{22--30},
  \doi{10.1109/MCSE.2011.37}.

\bibitemdeclare{article}{Wang20}
\bibitem{Wang20}
\bibinfo{author}{Quanlong \surnamestart Wang\surnameend}
  (\bibinfo{year}{2020}): \emph{\bibinfo{title}{Completeness of Algebraic
  {{ZX}}-Calculus over Arbitrary Commutative Rings and Semirings}}.
\newblock {\sl \bibinfo{journal}{\href{https://arxiv.org/abs/1912.01003}{\tt
  arXiv:1912.01003 [quant-ph]}}}.

\bibitemdeclare{article}{Yeung20}
\bibitem{Yeung20}
\bibinfo{author}{Richie \surnamestart Yeung\surnameend} (\bibinfo{year}{2020}):
  \emph{\bibinfo{title}{Diagrammatic {{Design}} and {{Study}} of {{Ansatze}}
  for {{Quantum Machine Learning}}}}.
\newblock {\sl \bibinfo{journal}{\href{https://arxiv.org/abs/2011.11073}{\tt
  arXiv:2011.11073 [quant-ph]}}}.

\bibitemdeclare{article}{ZhaoGao21}
\bibitem{ZhaoGao21}
\bibinfo{author}{Chen \surnamestart Zhao\surnameend} \&
  \bibinfo{author}{Xiao-Shan \surnamestart Gao\surnameend}
  (\bibinfo{year}{2021}): \emph{\bibinfo{title}{Analyzing the Barren Plateau
  Phenomenon in Training Quantum Neural Network with the {{ZX}}-Calculus}}.
\newblock {\sl \bibinfo{journal}{\href{https://arxiv.org/abs/2102.01828}{\tt
  arXiv:2102.01828 [quant-ph]}}}.

\end{thebibliography}

\end{document}